\renewcommand\hyper@natlinkbreak[2]{#1}
\DeclarePairedDelimiter\floor{\lfloor}{\rfloor}
\DeclarePairedDelimiter\ceil{\lceil}{\rceil}
\newtheorem{theorem}{Theorem}
\newtheorem{corollary}[theorem]{Corollary}
\newtheorem{proposition}{Proposition}
\newtheorem{lemma}[theorem]{Lemma}
\newenvironment{proof}[1][Proof]{\noindent\textbf{#1.} }{\hfill$\square$}
\newcolumntype{L}[1]{>{\raggedright\let\newline\\arraybackslash\hspace{0pt}}m{#1}}
\newcolumntype{C}[1]{>{\centering\let\newline\\arraybackslash\hspace{0pt}}m{#1}}
\newcolumntype{R}[1]{>{\raggedleft\let\newline\\arraybackslash\hspace{0pt}}m{#1}}
\begin{document}

\begin{titlepage}
\title{The importance of being discrete: on the (in-)accuracy of continuous approximations in auction theory\thanks{For comments and helpful suggestions, we would especially like to thank Miguel Ballester, Alan Beggs, Adam Brzezinski, Simon Cowan, Mikhail Drugov, Simon Finster, Caspar Jacobs, Bernhard Kasberger, Paul Klemperer, Nat Levine, Meg Meyer, Aiden Smith and Jasmine Theilgaard; as well as seminar audiences in Oxford and elsewhere.}}
\author{Itzhak Rasooly\thanks{Department of Economics, University of Oxford.} \and Carlos Gavidia-Calderon\thanks{Department of Computer Science, University College London.}}
\date{\today}
\maketitle
\vspace*{-0.4cm}
\begin{center}
\end{center}
\vspace*{-0.8cm}
\begin{abstract}
\noindent While auction theory views bids and valuations as continuous variables, real-world auctions are necessarily discrete. In this paper, we use a combination of analytical and computational methods to investigate whether incorporating discreteness substantially changes the predictions of auction theory, focusing on the case of uniformly distributed valuations so that our results bear on the majority of auction experiments. In some cases, we find that introducing discreteness changes little. For example, the first-price auction with two bidders and an even number of values has a symmetric equilibrium that closely resembles its continuous counterpart and converges to its continuous counterpart as the discretisation goes to zero. In others, however, we uncover discontinuity results. For instance, introducing an arbitrarily small amount of discreteness into the all-pay auction makes its symmetric, pure-strategy equilibrium disappear; and appears (based on computational experiments) to rob the game of pure-strategy equilibria altogether. These results raise questions about the continuity approximations on which auction theory is based and prompt a re-evaluation of the experimental literature.

\noindent \\
\vspace{-0.75cm}\\
\noindent \textsc{Keywords:} Auction Theory, Equilibrium, Discrete, Experiments
\vspace{0in}\\
\noindent\textsc{JEL Codes:} C70, C72, C91, D44\\

\bigskip
\end{abstract}
\setcounter{page}{0}
\thispagestyle{empty}
\end{titlepage}
\pagebreak \newpage


\section{Introduction} \label{introduction}

In auction theory, bids and valuations are typically viewed as continuous variables. That is to say, the theory assumes that players may bid any non-negative real number; and that valuations are continuously distributed over some subset of the real line. In actual auctions, however, both valuations and bids are discrete. For example, a participant in an auction would never be able to submit a bid of $\sqrt2$ pounds; and $\sqrt2$ pounds could never be the most that a participant is willing to pay for an object (i.e. their `valuation'). This raises a simple question: does introducing discreteness into standard game theoretic models of auctions substantially change their predictions?

Surprisingly, this question has received very little attention from economists. In the theoretical literature, there are some papers that examine the case of discrete valuations but continuous bids; and some others that examine the case of discrete bids but continuous valuations.\footnote{For discrete valuations and continuous bids, see  \cite{maskin1985auction}, \cite{riley1989expected}, \cite{wang1991common},  \cite{che2006revenue} and \cite{bergemann2017first}.
For discrete bids and continuous valuations, see  \cite{chwe1989discrete},
 \cite{rothkopf1994role}, \cite{david2007optimal}, \cite{cai2010note} and  \cite{goncalves2017note}.}
 However, there are far fewer that consider the case of discrete bids and discrete valuations. Moreover, those papers that do are interested in questions that are very different from ours. For instance, \cite{dekel2003rationalizable} introduce discrete values and bids in order to study the implications of rationalisability (whereas we focus on equilibrium); and in any case their results rest on the assumption that the number of bidders is very large.\footnote{There are also some papers which analyse the equilibria of auctions with discrete bids and known valuations \citep{boudreau2011all, li2017nash}.
 Our interest, however, lies in the case where valuations are uncertain.}

The question is also almost completely ignored in the experimental literature. In this literature, it is standard for experimenters to begin by writing down a continuous auction model and to then `test' the predictions of this model using a discrete laboratory experiment. It is occasionally asserted, without any real justification, that the discreteness makes little difference if the number of possible valuations is large (see, e.g., \cite{noussair2006behavior}, footnote 3). More often, however, the fact that discrete games are being used to test a continuous theory is not even discussed (see, e.g., \cite{coppinger1980incentives}; \cite{cox1982theory}; \cite{kagel1985individual};
\cite{dyer1989resolving}; \cite{chen1998nonlinear}; \cite{filiz2007auctions};
\cite{kirchkamp2011out}).\footnote{An important exception is a paper by \cite{goeree2002quantal}
that carefully analyses the equilibria of their experimental auction game. While their analysis is insightful, the case they consider (2 bidders, 6 possible valuations, 1 auction structure) is obviously a rather special one and could not (and is not intended to) shed very much light on the general case.} This is potentially concerning: if discrete equilibria are importantly different from continuous equilibria, then these experimental tests are invalid.


In this paper, we make substantial progress in answering this question through study of a discretised version of the canonical independent private values model. We assume that values and bids are restricted to the same finite and evenly spaced set (a fairly common set-up in the experimental literature, although not one that may be universally applied). We allow the discretisation to be arbitrarily fine (so that the difference between the bids can be as small as a penny). We then study the equilibria of three standard auction structures: the first-price, second-price and all-pay auctions. We also consider two different tie-breaking rules: fair tie-breaking (in which tying winners are chosen randomly) and the case in which players cannot win through a tie. While the former assumption is more popular in auction experiments, the latter is easier to study analytically. Ultimately, though, our results do not substantially hinge on which tie-breaking rule we assume.

We begin the analysis by considering the second-price auction. The central insight from classical auction theory -- that bidders should submit bids equal to their valuation -- applies in our discrete setting. Thus, introducing discreteness makes essentially no difference in the second-price auction, a fact which confirms the intuition of the many experimenters who have attempted to test the continuous theory using discrete laboratory auctions (\cite{coppinger1980incentives}; \cite{cox1982theory};
\cite{kagel1993independent}; \cite{aseff2004learning}; \cite{andreoni2007asymmetric};
\cite{cooper2008understanding}; \cite{drichoutis2015veil}; \cite{georganas2017optimistic}).

We then turn to the first-price auction and begin by proving some useful lemmas that hold regardless of the distribution that generates each player’s valuation. First, we show that bids are weakly increasing in values. Second, we show that symmetric equilibria cannot have jumps (for example, a situation in which bids jump from $3$ to $5$ when the valuation increases from $10$ to $11$). Finally, we show that if a discrete auction has a symmetric pure-strategy equilibrium, then it is (essentially) unique.

Using these results, we then study the equilibria of the first-price auction, focusing on the case of uniformly distributed valuations so that our results speak to the majority of auction experiments. In some cases (for example, with two bidders and an even number of values), we find that the auction can possess an equilibrium which is almost identical to the equilibrium in the continuous case and converges to the continuous equilibrium as the discreteness vanishes. In other cases, however, we uncover discontinuities. For example, with three or more bidders and a suitable tie breaking rule, the auction has no symmetric, pure-strategy equilibria -- in sharp contrast to the continuous case.

We then conduct a similar analysis of the all-pay auction. In this case, we once again find dramatic contrasts between our discrete games and their continuous counterparts (which are characterised by symmetric pure-strategy equilibria). Indeed, there are now no symmetric, pure-strategy equilibria at all -- a result that holds regardless of the number of bidders, tie-breaking rule and the fineness of the discretisation. Like our results for the first-price auction, this reveals a discontinuity in the set of equilibria (as the discretisation goes to 0) and contradicts assertions by experimenters about the equilibria of the games they study.

In principle, discrete auctions could have asymmetric equilibria that closely resemble the symmetric equilibria of their continuous counterparts. To investigate this possibility, we take a computational approach, writing down our games in normal form and computing their equilibria using standard solvers. Doing so is computationally challenging since the number of possible strategies quickly becomes very large (e.g. in one game whose equilibria we ultimately compute, each player has ten billion possible bidding functions to choose from). Nonetheless, we are able to significantly reduce the dimensionality of our problem through iterative deletion of dominated bids and thus compute the equilibria of a series of representative auction games.

In some cases (e.g. the first-price auction with 3 bidders), we do find asymmetric equilibria that closely resemble the symmetric equilibria of continuous theory. In general, however, the message of our computational results reinforces our analytical findings about symmetric equilibria. For almost all types of auction structure that we consider, we find examples in which the auction has no pure-strategy equilibria (symmetric or otherwise). This further reinforces the contrast between the discrete and the continuous case.

In cases where discrete auctions lack pure strategy equilibria, it follows from \cite{nash1950equilibrium} that their equilibria must be in mixed strategies. In our final section, we study such equilibria, paying particular attention to the question of whether the randomisation is `local' in the sense of occurring between at most two neighbouring bids. We show that this is emphatically \textit{not} the case: indeed, it is possible to construct discrete auctions in which one type randomises over nearly the entire range of feasible bids. Thus, not only can introducing discreteness make an interesting qualitative difference (by robbing auctions of pure strategy equilibria), but it can lead to equilibria which are quantitatively rather dissimilar from the pure strategy equilibria of textbook theory.

The remainder of this article is structured as follows. Section \ref{the-model} presents the (family of) models under consideration. Sections \ref{the-second-price-auction} -- \ref{asymmetric-equilibria} present the results for our three auction structures. Section \ref{sec:Mixed} studies mixed equilibria and shows that they can involve randomisation over an arbitrarily large range.  Finally, Section \ref{sec:Concluding remarks} concludes by discussing the implications of our findings for experimental and theoretical research.

\section{The model}
\label{the-model}

We consider $n \geq 2$ bidders who compete for a single and indivisible object. Bidder $i$'s value for this object $v_i$ is drawn from the finite set $\mathbb{X} = \{0, \delta, 2 \delta, ..., x\delta\}$ for some $\delta > 0$ and $x \in \mathbb{N}$. Without loss of generality, assume that every element in $\mathbb{X}$ is drawn with strictly positive probability; and (with some redundancy) let $S\equiv x+1$ denote the number of elements in $\mathbb{X}$. As usual, bidders know their own valuation but only the joint distribution of the values of the other bidders. Valuations are private and independently drawn. In addition, we assume that player $i$'s bid $b_i$ must belong to $\mathbb{X}$ (a common situation in the experimental literature, and one which plausibly applies to many non-experimental auctions).

In all auction structures that we consider, a player who submits a bid that is strictly higher than all opponent bids wins the auction. We consider two different rules that come into play in the event that two or more bids tie for highest. In the \textit{model without ties}, players win the auction if and only if their bid is strictly higher than all the bids submitted by their opponents. Thus, $\mathbb{P}(i \text{ wins}) = \mathbb{P}(b_i>b_j \hspace{0.5em} \forall j \neq i) $. In the \textit{model with ties}, a player can win if they tie with others who
also submit the highest bid. Specifically, if $m$ players submit the highest bid, they all have an equal chance ($1/m$) of being selected as the winner. This means, for example, that $\mathbb{P}(i\text{ wins}) = \mathbb{P}(b_i > b_j) + 0.5\mathbb{P}(b_i = b_j)$ in the case of $n = 2$.

Of course, bidder payoffs do depend on the auction structure. In the first-price auction, the player who submits the highest bid wins the object and pays a price equal to their bid. Thus, the expected payoff of a bidder with a value $v_i$ and submits a bid of $b_i$ is $\pi(v_i, b_i) = (v_i - b_i)\mathbb{P}(i\text{ wins})$. In the all-pay auction, players pay their bid even if they lose. Hence, $\pi(v_i, b_i) = v_i\mathbb{P}(i\text{ wins}) - b_i$. In the second-price auction, the player who submits the highest bid wins the object but pays a price equal to the second highest bid. We assume throughout that bidders are risk-neutral and therefore maximise their expected payoff.

Bidder's $i$'s (ex-ante) strategy is a function $\beta_i\colon \mathbb{X}\rightarrow \mathbb{X}$ mapping from each valuation they might have to a (permissible) bid. We define a (pure) \textit{equilibrium} as a set of bidding functions, one for each player, such that each player's bidding function maximises their expected payoff given the bidding functions of the other players.\footnote{We defer all discussion of mixed equilibria to Section \ref{sec:Mixed}.} (This is equivalent to a pure-strategy Bayes-Nash equilibrium with risk-neutral bidders.) When all bidders choose the same bidding function, we say that the equilibrium is \textit{symmetric}. We focus on symmetric, pure-strategy equilibria since these equilibria characterise the continuous case -- and the point of the exercise is to see how discrete equilibria compare to this benchmark.

Finally, we restrict attention to equilibria in undominated strategies (i.e. equilibria in which players never choose strategies which are weakly dominated). While this might seem restrictive, this cannot rule out any of the equilibria considered in the standard analysis of continuous auctions (since these are in undominated strategies). In addition, the restriction is easy to motivate: for example, players will never choose dominated strategies if their opponents `tremble' with arbitrarily low probability \citep{selten}. Since we only consider equilibria in undominated strategies, we omit the phrase ‘in undominated strategies’ when stating our results.
\vspace{-0.1cm}
\section{The second-price auction} \label{the-second-price-auction}

To begin with, we consider the second-price auction. Here the analysis is fairly trivial, as the following proposition indicates.\footnote{All proofs are collected in the appendix.}
\begin{proposition}\label{prop 1}
In the model with ties, each player sets $\beta(v) = v$ in the unique equilibrium. In the model without ties, the set of equilibria is precisely the set of strategy profiles in which each player sets either $\beta(v) = v$ or $\beta(v)=v+\delta$.
\end{proposition}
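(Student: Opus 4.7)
The plan is to adapt the classical weak-dominance argument for the second-price auction to our discrete bid space, handling the two tie-breaking rules separately. The main subtlety is careful handling of ties at the pivotal bid, since these simultaneously determine whether a player wins and, if so, the price paid.

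For the model with ties, I would show that $\beta(v) = v$ weakly dominates every other admissible bid, strictly so against some opponent profile, by conditioning on the maximum opposing bid $M$. For a deviation $b > v$: outcomes coincide whenever $M < v$ or $M > b$; when $v < M < b$ the deviation wins and pays $M > v$, yielding negative surplus against $0$ from bidding $v$; and when $M = b$ the tie-breaking rule gives the strictly negative expected payoff $(v-b)/m$ against $0$ from bidding $v$. A symmetric argument rules out $b < v$, with the decisive case being $b \le M < v$, where bidding $v$ wins and yields positive surplus $v-M$ whereas $b$ either loses outright or, when $M = b$, wins only with probability $1/m$. These pointwise comparisons extend to full bidding functions by pointwise replacement, so the only undominated strategy is $\beta(v) = v$, and the resulting profile is trivially an equilibrium by weak dominance.

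For the model without ties, I would first observe that for any $v < x\delta$ the bids $v$ and $v+\delta$ are payoff-equivalent against every opponent profile: bidding $v$ wins iff $M \le v-\delta$ while $v+\delta$ wins iff $M \le v$, and the extra winning event $M=v$ contributes payoff $v-M = 0$; in both cases the winning price equals $M$. Next, every other admissible bid is weakly dominated: $b < v$ is dominated by $v$ (one forgoes positive surplus $v-M$ in the events $M \in \{b, b+\delta, \ldots, v-\delta\}$), and $b > v+\delta$ is dominated by $v+\delta$ (one overpays in the events $M \in \{v+\delta, \ldots, b-\delta\}$). Hence the set of undominated bids for type $v$ is exactly $\{v, v+\delta\} \cap X$. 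Because any two such bids give identical expected payoffs against every opponent profile, every strategy profile with $\beta_i(v) \in \{v, v+\delta\}$ is simultaneously a best response to any other such profile; and conversely every equilibrium in undominated strategies must take this form, establishing the claimed characterisation.
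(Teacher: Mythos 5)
Your proposal is correct and takes essentially the same route as the paper, which simply invokes ``a standard argument'' for the weak-dominance claims and notes the payoff-equivalence of $v$ and $v+\delta$ in the no-ties model; you have merely written out the case analysis on the maximum opposing bid that the paper leaves implicit. The one point worth keeping explicit is the observation you already make, that a bid which weakly dominates all alternatives is automatically a best response to every opponent profile, which is what turns the dominance argument into the equilibrium characterisation.
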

The point here is that in any equilibrium (in undominated strategies), players bid their valuation (or perhaps their valuation plus $\delta$). Of course, players also bid their valuation in the unique equilibrium in undominated strategies of the continuous model. Thus, introducing discreteness makes essentially no difference in the second-price auction, a fact that confirms the intuition of the many experimenters who have attempted to test the continuous theory using a discrete laboratory auction. Since the analysis of the second-price auction is so straightforward, we quickly move on to our next auction structure.
\vspace{-0.1cm}
\section{The first-price auction} \label{The First-Price Auction}
We begin our analysis of the first-price auction with some useful lemmas that hold independently of the distribution that generates the players’ valuations.
\begin{lemma}\label{lemma 1}
The set of equilibria does not depend on $\delta$.
\end{lemma}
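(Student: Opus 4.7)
The plan is to exhibit a natural bijection between the strategies (and hence equilibria) of the auction indexed by a given discretisation $\delta$ and those of the auction indexed by any other discretisation $\delta'$ (keeping the number of grid points $S = x+1$ fixed), and to show that this bijection preserves best responses. The underlying intuition is simply that $\delta$ is a unit of account: scaling all monetary quantities by a common factor preserves both the ordinal winning structure of the auction and the ordinal ranking of payoffs, so it cannot alter which strategy profiles constitute equilibria.

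More concretely, I would fix $x$ and consider two parameters $\delta, \delta' > 0$, with associated value/bid grids $X_\delta = \{0,\delta,\dots,x\delta\}$ and $X_{\delta'} = \{0,\delta',\dots,x\delta'\}$. Define the bijection $\phi \colon X_\delta \to X_{\delta'}$ by $\phi(k\delta) = k\delta'$ for $k \in \{0,1,\dots,x\}$. Given any pure strategy profile $(\beta_1,\dots,\beta_n)$ on $X_\delta$, define the corresponding profile on $X_{\delta'}$ by $\beta_i'(v') \equiv \phi\bigl(\beta_i(\phi^{-1}(v'))\bigr)$. Because $\phi$ is order-preserving, the event $\{b_i > b_j \ \forall j \neq i\}$ (respectively $\{b_i \geq b_j\ \forall j\}$ with the tie-breaking rule) has the same probability under $(\beta_1,\dots,\beta_n)$ with values drawn uniformly on $X_\delta$ as the analogous event does under $(\beta_1',\dots,\beta_n')$ on $X_{\delta'}$. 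Moreover this preservation of $\mathrm{P}(i\text{ wins})$ holds for arbitrary value distributions, provided the distributions on $X_\delta$ and $X_{\delta'}$ are related through $\phi$.

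The next step is to observe that in the first-price auction the expected payoff scales linearly under $\phi$: writing $\pi_\delta$ and $\pi_{\delta'}$ for the payoff functionals,
\begin{equation*}
\pi_{\delta'}(\phi(v_i),\phi(b_i)) \;=\; (\phi(v_i)-\phi(b_i))\,\mathrm{P}(i\text{ wins}) \;=\; \tfrac{\delta'}{\delta}\,\pi_\delta(v_i,b_i).
\end{equation*}
Since $\delta'/\delta > 0$, the ordering over deviations is preserved: $\beta_i$ is a best response to $\beta_{-i}$ at $v_i$ in the $\delta$-auction if and only if $\beta_i'$ is a best response to $\beta_{-i}'$ at $\phi(v_i)$ in the $\delta'$-auction. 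Iterating over players and valuations, $(\beta_1,\dots,\beta_n)$ is an equilibrium of the $\delta$-auction iff $(\beta_1',\dots,\beta_n')$ is an equilibrium of the $\delta'$-auction. The same argument also handles weak dominance (so the undominated-strategy restriction is preserved), since dominance relations are determined by the same payoff comparisons.

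There is no real obstacle here beyond being careful about definitions; the only subtlety is checking that the bijection respects the tie-breaking rule (trivial, since $\phi$ preserves equalities as well as strict inequalities) and that it commutes with the restriction to undominated strategies. The same argument would in fact apply verbatim to the second-price auction (payoff $(v_i - b_{(2)})\mathrm{P}(i\text{ wins})$ scales by $\delta'/\delta$) and to the all-pay auction (payoff $v_i\mathrm{P}(i\text{ wins}) - b_i$ also scales by $\delta'/\delta$), which is presumably why the authors state the lemma in a form not specific to the first-price auction.
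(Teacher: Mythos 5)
Your proposal is correct and takes essentially the same route as the paper: both arguments observe that rescaling the grid by $\delta'/\delta$ is a positive linear transformation of payoffs that preserves the winning events and hence all best-response (and dominance) comparisons. The paper simply divides the equilibrium inequality through by $\delta$ rather than writing out the bijection $\phi$ explicitly, but the content is identical.
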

Since our interest is in equilibria, Lemma \ref{lemma 1} allows us normalise $\delta = 1$ without loss of generality. In other words, we may view the bids and values as integers and strategies as a map from some set of consecutive integers to itself. This is not very surprising: obviously there is no important distinction between the game in which players bid $2\delta$ when their value is $3 \delta$, and the game in which they bid 2 when their value is 3. Holding the number of valuations and bids fixed, the only effect of varying $\delta$ is to induce a (positive) linear transformation in the payoffs. Throughout the following, then, we will set $\delta = 1$.
\begin{lemma}[No Dominated Strategies.]\label{lemma 2}
Suppose that a bidding function $\beta$ is chosen in an equilibrium. In the model with ties, $\beta(0)=0$ and $\beta(v) \leq v - 1$ for all $v \geq 1$. In the model without ties, $\beta(0)=0$, $\beta(1) \leq 1$ and $\beta(v) \leq v - 1$ for all $v \geq 2$.
\end{lemma}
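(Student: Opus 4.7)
The plan is to rule out bids outside the stated ranges by exhibiting, in each case, an admissible bid that weakly dominates them (and strictly dominates for some realisation of opponents' bids); since the lemma restricts attention to equilibria in undominated strategies, this suffices. Because payoffs in a first-price auction depend on the opponents' profile only through the maximum opponent bid $M$ and, where ties matter, the count of opponents attaining $M$, I would carry out every required comparison by case analysis on $M$.

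First I would establish the common upper bound $\beta(v)\le v$ by showing that, for every $k\ge 1$, bid $v$ weakly dominates bid $v+k$. A case split on $M$ reveals that the two bids tie only when $M>v+k$ (both lose, both earn $0$), and that bid $v$ strictly outperforms bid $v+k$ in every other case: wherever bid $v+k$ wins or tie-wins it picks up the nonpositive payoff $v-(v+k)=-k$, while bid $v$ at worst earns $0$. This step alone delivers $\beta(0)=0$, since for $v=0$ the dominator is bid $0$ itself.

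Next, for the model with ties I would show $\beta(v)\le v-1$ for every $v\ge 1$ by arguing that bid $v-1$ weakly dominates bid $v$. Splitting on $M$: if $M\le v-2$, bid $v-1$ wins outright with payoff $1$ while bid $v$ wins with payoff $0$; if $M=v-1$, bid $v-1$ wins via tie with positive probability and conditional payoff $1$, while bid $v$ wins outright with payoff $0$; if $M\ge v$, both yield $0$. Strict improvement in the first two cases occurs with positive probability because every element of $\text{X}$ is drawn with strictly positive probability.

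For the model without ties a subtler treatment is required at $v=1$. For $v\ge 2$ the previous comparison still works (the $M=v-1$ case degrades from strict to weak, but the $M\le v-2$ case is still a positive-probability event, delivering the needed strict improvement), so $\beta(v)\le v-1$. For $v=1$, however, the candidate bids $0$ and $1$ turn out to be payoff-equivalent in \emph{every} realisation: bid $1$ wins precisely when every opponent bids $0$, yielding payoff $1-1=0$, which is exactly what bid $0$ earns by losing in that same event; and when some opponent bids at least $1$, both bids lose. Neither weakly dominates the other, so only the weaker conclusion $\beta(1)\le 1$ is recoverable; bids $b\ge 2$ are still eliminated by the upper-bound step applied at $v=1$. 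The main delicate point --- and the one I expect to be the real obstacle in getting the statement of the lemma right --- is recognising that this equivalence at $v=1$ in the tie-less model is genuine rather than a technical artefact, which is precisely why the lemma must split into two separate clauses.
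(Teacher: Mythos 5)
Your proof is correct and takes essentially the same route as the paper's: both arguments proceed by exhibiting, for each offending bid, a weakly dominating alternative, and both split on the tie-breaking rule with the same special treatment of $v=1$ in the no-ties model (where bids $0$ and $1$ are payoff-equivalent). The only differences are cosmetic --- the paper uses bid $0$ to dominate bids above $v$ and the bid $v$ itself (and bid $1$ to eliminate bid $v$ when $v\ge 2$ without ties, invoking the equilibrium fact $\beta(0)=0$ for strictness), whereas you use bids $v$ and $v-1$ and keep the whole argument at the level of pure weak dominance.
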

Roughly speaking, Lemma \ref{lemma 2} rules out bids that exceed valuations: such bids are weakly dominated. As a result, it pins down bidding behaviour at the `start' of the bidding function (e.g., when $v = 0$) in a similar way to the boundary condition of continuous theory. As it will turn out, these low bids then pin down bids for higher valuations by inductive arguments. Thus, the assumption that players avoid dominated strategies will prove crucial for the analysis.
\begin{lemma}[Monotonicity.]\label{lemma 3}
Suppose that a bidding function $\beta$ is chosen in an equilibrium. Then for any two values $v, v' \in \mathbb{X}$, $v' > v$ implies that $\beta(v') \geq \beta(v)$.
\end{lemma}
Lemma \ref{lemma 3} states that equilibrium bids must be (weakly) increasing in valuations. While this must hold in equilibrium, it is perhaps worth noting that it holds more generally. To be more precise, suppose that the player believes that there is a strictly positive probability that they will win by submitting a bid of $\beta(v')$. Then, regardless of whether they expect their opponents to play equilibrium strategies, their optimal bid cannot decrease when their valuation increases from $v$ to $v'$. In other words, the monotonicity property arises more from the logic of best responses than from equilibrium reasoning.

We now examine some of the properties of \textit{symmetric} equilibria (henceforth, SE), which are characterised by a single bidding function $\beta$. To begin with, note that, in the model without ties, an SE cannot have jumps (i.e. a situation in which the bid jumps from $b$ to some $b' \geq b + 2$ when the valuation increases from $v$ to $v+1$). To see this, suppose (for contradiction) that there were such an SE and consider what would happen if a player with a value of $v+1$ were to deviate by reducing their bid to $b+1$. The deviation would not reduce their probability of winning: both before the deviation and after it, they win the auction if and only if the highest bid of their opponents is less than or equal to $b$. Obviously, however, it would increase their payoff in the event that they did win. Finally, since they had a strictly positive probability of winning in an SE (e.g. whenever their opponents all bid $b$), the deviation would strictly increase their expected payoff, contradicting the supposed optimality of their strategy in the candidate SE. We thus have the following result.
\begin{lemma}[No jumps]\label{lemma 4}
Suppose that $\beta$ is an SE in the model without ties. Then for every $v, v + 1 \in \mathbb{X}$, $\beta(v+1) \leq \beta(v)+1$.
\end{lemma}

It is not as straightforward to prove a corresponding lemma in a model with ties. The reason is that the reduction in a player’s bid from $b'$ to $b+1$ now does reduce the player’s probability of winning: for instance, it eliminates their chance of winning when their opponents all bid $b'$. Nonetheless, we are able to prove qualified versions of this lemma for particular distributions of player valuations (see Lemma \ref{lemma 13} for the details).

Finally, we prove a uniqueness result.
\begin{lemma}\label{lemma 5}
In the model without ties, the SE (if it exists) is unique up to a choice of $\beta(1)$.
\end{lemma}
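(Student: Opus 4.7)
The plan is to proceed by induction on the valuation, after using Lemma \ref{lemma 1} to normalise $\delta = 1$. The base is handled by Lemma \ref{lemma 2}: any SE must satisfy $\beta(0) = 0$, and $\beta(1) \in \{0,1\}$ is the only free parameter the lemma allows. It therefore remains to show that, for each $k \geq 1$, the SE value of $\beta(k+1)$ is uniquely determined by the restriction $\beta|_{\{0,\ldots,k\}}$.

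Fix an SE $\beta$ and set $b := \beta(k)$. By monotonicity (Lemma \ref{lemma 3}) and no-jumps (Lemma \ref{lemma 4}), $\beta(k+1) \in \{b, b+1\}$. A value-$(k+1)$ player's payoff from a bid $b'$ equals $(k+1-b')\,P_\beta(b')^{n-1}$, where $P_\beta(b') := \Pr(\beta(V) < b')$ for an opponent's random valuation $V$. Since $\beta$ is weakly increasing and $\beta(k)=b$, the set $\{v : \beta(v) < b\}$ is contained in $\{0,\ldots,k-1\}$, so $P_\beta(b)$ is a function of $\beta|_{\{0,\ldots,k\}}$ alone. The probability $P_\beta(b+1)$, however, is sensitive to $\beta(k+1)$ itself.

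The key observation is that the two candidate extensions yield values of $P_\beta(b+1)$ that can be compared without any knowledge of $\beta$ above $k+1$. If $\beta(k+1) = b+1$, monotonicity forces $\{v : \beta(v) \leq b\} = \{0,\ldots,k\}$, so $P_\beta(b+1) = \Pr(V \leq k)$ exactly. If instead $\beta(k+1) = b$, then $V = k+1$ also lies in $\{\beta(V) \leq b\}$, giving $P_\beta(b+1) \geq \Pr(V \leq k+1) > \Pr(V \leq k)$, where strictness uses that every valuation has positive probability. The SE conditions for the two candidates are therefore mutually exclusive: if the extension $\beta(k+1) = b$ is optimal, then
\[
(k+1-b)\,P_\beta(b)^{n-1} \,\geq\, (k-b)\,P_\beta(b+1)^{n-1} \,>\, (k-b)\,\Pr(V \leq k)^{n-1},
\]
which strictly violates the weak inequality needed for $\beta(k+1) = b+1$ to be optimal. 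Hence $\beta(k+1)$ is pinned down by $\beta|_{\{0,\ldots,k\}}$, and the induction closes. (The trivial edge case $b=k$, which can only occur at $k=1$ with $\beta(1)=1$, is handled directly by Lemma \ref{lemma 2}, which forces $\beta(2)=1$.)

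The main obstacle is the apparent circularity: $P_\beta(b+1)$ depends on $\beta(k+1)$ itself and, via further applications of no-jumps, in principle on $\beta(v)$ for $v > k+1$. The trick is to compare the two extensions only against a single threshold, $\Pr(V \leq k)$, which is insensitive to those unknown higher values; the single extra mass point at $V = k+1$ in the $\beta(k+1) = b$ branch is then enough to tip the balance and force the two SE conditions apart.
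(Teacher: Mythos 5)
Your proof is correct and takes essentially the same route as the paper's: the same induction on the valuation, the same pair of optimality conditions comparing the bids $b$ and $b+1$ at value $k+1$, and the same key observation that the probability of an opponent bidding at most $b-1$ is fixed by $\beta|_{\{0,\ldots,k\}}$ while the probability of bidding at most $b$ differs strictly between the two candidate extensions, making the two conditions mutually exclusive. The only cosmetic difference is that you treat the $b=k$ edge case at $k=1$ explicitly, whereas the paper sidesteps it by noting $\beta(2)=1$ and starting the induction at $v=2$.
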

Lemma \ref{lemma 5} states that, in the model without ties, the SE (if it exists) is uniquely implied by the choice of $\beta(1)$. Since players avoid dominated strategies, either $\beta(1) = 0$ or $\beta(1) = 1$. Hence, there are at most two SE in the first-price auction. In some cases, the choice of $\beta(1)$ makes no difference to higher bids, so the SE is essentially unique. In others, however, changing $\beta(1)$ has implications for bids further up the chain -- a fact that will be illustrated by the next proposition.

Using these results, we now examine the equilibria of the first-price auction. To do so, we will now assume that the values are uniformly distributed. The reasons are twofold. First, this distribution is by far most the popular choice by auction experimenters \citep{kagel1995auctions, kagel2011auctions};
so its analysis is of particular importance if we wish to interpret their results. Second, it turns out that it is impossible to say very much about the existence of SE for arbitrary distributions. Thus, some distributional choice is necessary.

We begin with the case of 2 bidders in the model without ties.
\begin{proposition}\label{prop 2}
In the first-price auction with two bidders and without ties, there are exactly two SE: $\beta(v)= \floor{v/2}$ and $\beta(v)= \ceil{v/2}$.
\end{proposition}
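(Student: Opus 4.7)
My plan is to combine Lemmas \ref{lemma 2} and \ref{lemma 5} to reduce the problem to two candidates, and then verify each candidate directly by best-response analysis.

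By Lemma \ref{lemma 2}, any SE bidding function in the model without ties satisfies $\beta(0) = 0$ and $\beta(1) \in \{0, 1\}$. By Lemma \ref{lemma 5}, the SE (if one exists) is uniquely determined by $\beta(1)$. Hence the set of SE has at most two elements. Since $\floor{1/2} = 0$ and $\ceil{1/2} = 1$, the two claimed bidding functions correspond to the two permissible values of $\beta(1)$, so it suffices to check that each is in fact an SE.

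To verify that $\beta(v) = \floor{v/2}$ is an SE, I would suppose the opponent plays this function and compute a bidder's expected payoff as a function of their value $v$ and bid $c$. Since the opponent bids $\floor{v'/2} < c$ iff $v' \leq 2c - 1$, the uniform distribution on $\textnormal{X} = \{0, 1, \ldots, x\}$ yields a winning probability of $\min(2c, S)/S$. Writing $K = \floor{x/2}$, for $c \in \{0, \ldots, K\}$ the expected payoff is $2c(v - c)/S$, a concave quadratic in $c$ maximised over the reals at $c = v/2$; hence the best integer bid in this range is $\floor{v/2}$ (uniquely when $v$ is even, and tied with $\ceil{v/2}$ when $v$ is odd). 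A short calculation then rules out deviations to $c > K$ (which yield payoff at most $v - K - 1$), as $(v - K - 1)(x + 1) \leq (v - K) \cdot x$ reduces to $v \leq K + x + 1$, which always holds. So the candidate bid is a best response at every $v$.

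Verifying $\beta(v) = \ceil{v/2}$ is essentially analogous: the opponent now satisfies $\ceil{v'/2} < c$ iff $v' \leq 2c - 2$, so bidding $c \geq 1$ wins with probability $(2c - 1)/S$ in the relevant range, the payoff $(v - c)(2c - 1)/S$ is again concave in $c$, and its unique integer maximiser turns out to be $\ceil{v/2}$ (with large deviations handled as before). Combining these two verifications with Lemma \ref{lemma 5} gives the claim. The main technical nuisance will be the boundary cases distinguishing $x$ even from $x$ odd, which shift the maximum bid used in each candidate SE and hence the inequalities ruling out deviations above $K$; these are mechanical but must be accounted for so no profitable deviation slips through.
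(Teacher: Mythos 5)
Your proposal is correct and follows essentially the same route as the paper: reduce to two candidates via Lemma \ref{lemma 5} (the paper invokes it at the end rather than the start), then verify each candidate by computing the expected payoff against an opponent playing it, exploiting concavity of the resulting quadratic in the bid and handling the win-for-sure corner deviation separately. The odd-/even-$x$ boundary details you flag are exactly the ones the paper works through explicitly (noting that at $v=x$ with $x$ odd the corner deviation is exactly indifferent), so nothing substantive is missing.
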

Proposition \ref{prop 2} demonstrates that continuous auction theory can be an extremely good approximation to discrete reality (see \cite{chwe1989discrete} for a similar result in the context of discrete bids but continuous valuations). In the continuous case, players set $\beta(v) = v/2$. In the discrete case, they also bid half their value when possible and they round their bid to one of the two closest integers otherwise. (In one SE, they round up; in the other, they round down.) Thus, the difference between continuous and discrete case can be extremely small (e.g. if the bids and values are discretised in pennies) and the difference vanishes as the discretisation goes to 0. That is,
\begin{corollary}\label{corollary 6}
In any SE with two bidders and without ties, bidding functions and expected revenues converge to their counterparts in the continuous auction as $\delta \rightarrow 0$.
\end{corollary}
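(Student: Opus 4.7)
The plan is to leverage Proposition~\ref{prop 2} together with Lemma~\ref{lemma 1}. Fix the support $\text{X} = \{0, \delta, 2\delta, \ldots, V\}$, holding the endpoint $V$ constant as the discretisation $\delta \to 0$ (so that the number of values $x + 1 = V/\delta + 1$ grows). Lemma~\ref{lemma 1} rescales Proposition~\ref{prop 2} to say that the two SE are $\beta_\delta(k\delta) = \lfloor k/2 \rfloor \delta$ and $\beta_\delta(k\delta) = \lceil k/2 \rceil \delta$. In either case $|\beta_\delta(v) - v/2| \leq \delta/2$ for every $v \in \text{X}$, which yields uniform convergence of the discrete SE bidding function to the continuous equilibrium $\beta(v) = v/2$. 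This settles the first half of the corollary.

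For expected revenue, observe that in the model without ties a player pays only when strictly outbidding the opponent, so the discrete expected revenue is
\[
R_\delta = \mathbb{E}\bigl[\max(\beta_\delta(V_1), \beta_\delta(V_2)) \cdot \mathbf{1}(\beta_\delta(V_1) \neq \beta_\delta(V_2))\bigr],
\]
which I would decompose as
\[
R_\delta = \mathbb{E}\bigl[\max(\beta_\delta(V_1), \beta_\delta(V_2))\bigr] - \mathbb{E}\bigl[\beta_\delta(V_1) \cdot \mathbf{1}(\beta_\delta(V_1) = \beta_\delta(V_2))\bigr].
\]
For the first term, the Lipschitz bound $|\max(a, b) - \max(c, d)| \leq \max(|a-c|, |b-d|)$ combined with $|\beta_\delta(v) - v/2| \leq \delta/2$ makes $\max(\beta_\delta(V_1), \beta_\delta(V_2))$ uniformly $\delta/2$-close to $\max(V_1, V_2)/2$; and the expectation against the discrete uniform measure on $\text{X}^2$ converges to the integral of the continuous bounded function $\max(V_1, V_2)/2$ against the uniform measure on $[0,V]^2$ by Riemann-sum convergence. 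The limit equals $V/3$, which is precisely the expected revenue of the continuous first-price auction with two uniform-value bidders.

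The main obstacle is the tie-correction (second) term, because ties have measure zero in the continuous benchmark but strictly positive probability at every $\delta > 0$. Here I would use a counting argument: under either rounding rule each bid level is achieved by at most two of the $x+1$ equally likely valuations, so the number of value pairs $(v_1, v_2)$ with $\beta_\delta(v_1) = \beta_\delta(v_2)$ is at most $2(x+1)$, giving $P(\beta_\delta(V_1) = \beta_\delta(V_2)) \leq 2/(x+1) = O(\delta)$. Multiplying by the trivial upper bound $\beta_\delta(V_1) \leq V$ shows that the tie correction is $O(\delta) \to 0$, and the two estimates together yield $R_\delta \to V/3$ as required.
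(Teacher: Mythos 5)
Your proof is correct. The first half (bidding functions) is exactly the paper's argument: both of you observe that the two SE from Proposition~\ref{prop 2} satisfy $|\beta_\delta(v) - v/2| \leq \delta/2$, giving uniform convergence. For expected revenue, however, you take a genuinely different and in fact more complete route. The paper argues abstractly: valuations converge in distribution to the continuous uniform, the continuous mapping theorem transfers this to bids and then to the maximum bid, and bounded convergence gives convergence of the expected maximum bid. Notably, the paper's sketch identifies revenue with the expected maximum bid and never addresses the fact that in the model without ties the seller collects nothing when the two bids coincide --- an event of strictly positive probability for every $\delta > 0$ (and one to which the continuous mapping theorem does not directly apply, since the tie indicator is not continuous). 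Your decomposition isolates precisely this correction term and kills it with the counting bound $\mathrm{P}(\beta_\delta(V_1) = \beta_\delta(V_2)) \leq 2/(x+1) = O(\delta)$, which is the step the paper's proof silently needs. Your treatment of the main term via the Lipschitz property of $\max$ and Riemann-sum convergence is more elementary than the paper's weak-convergence machinery but reaches the same limit $V/3$. In short: same idea for the bidding functions, a different and more careful argument for revenues that patches a small gap in the published sketch.
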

Unfortunately, however, this happy state of affairs ceases to be once we introduce additional bidders (while maintaining our previous assumption about ties). In fact, we have the following rather striking result:
\begin{proposition}\label{prop 3}
Suppose that $n \geq 3$ and that the maximum $x$ satisfies
\begin{equation} \label{eq 1}
x>\frac{2^\frac{1}{n-1}}{2^\frac{1}{n-1}-1}.
\end{equation}
Then there are no SE in the first-price auction without ties.
\end{proposition}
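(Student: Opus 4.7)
The plan is to argue by contradiction: assume that a symmetric equilibrium $\beta$ exists, set $B = \beta(x)$, and use the best-response conditions together with Lemmas~\ref{lemma 2}--\ref{lemma 4} to reach a contradiction. It helps to note first that the hypothesis $x > 2^{1/(n-1)}/(2^{1/(n-1)}-1)$ is algebraically equivalent to $(x/(x-1))^{n-1} < 2$, and this is the inequality that I will ultimately violate.

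First I would handle the case $B = x-1$. Lemma~\ref{lemma 4} gives $\beta(x-1) \ge x-2$, while Lemma~\ref{lemma 2} gives $\beta(x-1) \le x-2$, so $\beta(x-1) = x-2$; iterating the same argument downward then yields $\beta(v) = v-1$ for every $v \in \{2, \ldots, x\}$. From this structure I read off $w(x-1) = x/(x+1)$ (only value $x$ bids above $x-2$) and $w(x-2) = (x-1)/(x+1)$. The BR of value $x$ against deviating from bid $x-1$ down to $x-2$ reads $w(x-1)^{n-1} \ge 2\, w(x-2)^{n-1}$, which rearranges to $(x/(x-1))^{n-1} \ge 2$ and contradicts the hypothesis.

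Next I would handle the case $B \le x-2$. Let $x_0 = \min\{v : \beta(v) = B\}$, so that $w(B) = x_0/(x+1)$. The deviation by value $x$ to bid $B+1$ wins with probability one (by monotonicity no other player bids above $B$), and BR gives $(x-B)(x_0/(x+1))^{n-1} \ge x-B-1$; since $x-B \ge 2$, this forces $x_0/(x+1) \ge 2^{-1/(n-1)}$. I then split on $x_0$. In the subcase $x_0 = B+1$, downward induction again yields $\beta(v) = v-1$ for $2 \le v \le B$, giving $w(B) = (B+1)/(x+1)$ and $w(B-1) = B/(x+1)$; the BR of value $B+1$ against deviating to $B-1$ then forces $((B+1)/B)^{n-1} \ge 2$, i.e., $B \le 1/\kappa$ with $\kappa = 2^{1/(n-1)}-1$. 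Combined with the lower bound $B \ge (x-\kappa)/(1+\kappa)$ coming from value $x$'s upward BR, and with the hypothesis $x > 1/\kappa + 1$, no integer $B$ satisfies both inequalities. In the subcase $x_0 \ge B+2$, the analogous BR analysis applied to the group bidding $B-1$ (whose smallest element $x_1$ satisfies $x_1 \le x_0 - 1$ by no jumps) yields further constraints; iterating down the bidding function either collapses into the first subcase or directly exhausts the integer values.

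The main obstacle will be the second subcase above: the best-response conditions couple adjacent bid levels through the function $w$, so the recursion must be handled with care. At its heart, though, the argument is that the single inequality $(x/(x-1))^{n-1} < 2$ supplied by the hypothesis, though stated at the very top of $\beta$, propagates downward through the chain structure forced by Lemmas~\ref{lemma 2}--\ref{lemma 4} and rules out every possible configuration.
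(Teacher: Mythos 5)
Your opening move (the case $B=x-1$: no-jumps plus Lemma~\ref{lemma 2} force $\beta(v)=v-1$ everywhere, and value $x$'s downward deviation then contradicts $(x/(x-1))^{n-1}<2$) is correct, and it is in fact exactly the paper's proof that the bidding function cannot be strictly increasing throughout. The problems are in the case $B\le x-2$. First, in the subcase $x_0=B+1$, your two bounds $B\le 1/\kappa$ and $B\ge(x-\kappa)/(1+\kappa)$ are \emph{not} incompatible as real numbers under the stated hypothesis: writing $c=2^{1/(n-1)}$, they are jointly satisfiable whenever $c/(c-1)<x\le c/(c-1)+(c-1)$, a window of width $c-1$ which does contain an integer $x$ for some $n$ (e.g.\ $n=4$, $x=5$, where the window is roughly $(4.85,5.11]$). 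Excluding an integer $B$ from the residual interval then requires a separate integrality argument that you have not supplied; it happens to go through, but it is not the one-line contradiction you assert. Second, and more seriously, the subcase $x_0\ge B+2$ --- the generic configuration --- is not proved at all. ``Iterating down\ldots either collapses into the first subcase or directly exhausts the integer values'' is a hope, not an argument: in a top-down recursion the win probability at each bid level depends on the sizes and locations of all the flat groups above it, none of which are pinned down, so the constraints do not close.

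The paper sidesteps both difficulties by working bottom-up rather than top-down. It defines $k$ as the \emph{smallest} value $\ge 2$ with $\beta(k)=\beta(k-1)$ (your case $B=x-1$ is precisely the proof that such a $k$ exists under hypothesis~(\ref{eq 1})). Below $k$ the bidding function is forced to be $\beta(v)=v-1$, so the win probabilities at all relevant bids are known exactly, and two purely local best-response conditions --- value $k$ preferring bid $k-2$ to $k-1$, and value $k-1$ preferring bid $k-2$ to $k-3$ --- yield $\frac{c+1}{c-1}\le k\le\frac{2c-1}{c-1}$, an empty interval since $c<2$. That contradiction is independent of $x$ and of everything happening above $k$, which is exactly what your decomposition by $B=\beta(x)$ cannot deliver without substantial extra work. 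If you wish to rescue your subcase $x_0=B+1$, replace value $x$'s upward deviation by value $B+2$'s preference for bid $B$ over $B+1$, and pair it with value $B+1$'s preference for $B$ over $B-1$: that is the paper's pair of inequalities in disguise and closes the window without any appeal to the size of $x$.
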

To illustrate the argument, we sketch a proof for the case of $n = 3$. To begin, we study the start of the bidding function. By Lemma \ref{lemma 2}, $\beta(0) = 0$, $\beta(1) \leq 1$ , and $\beta(2) =1$. Furthermore, it is easy to check that $\beta(3) =2$ in any SE (if this were not true, i.e. $\beta(3) = 1$, it would then pay for a bidder with a value of 3 to deviate to a bid of 2.)

Of particular interest is the bid when $v=4$. By monotonicity and no jumps, there are only two possibilities: $\beta(4) =2$ and $\beta(4) = 3$. Let us consider these possibilities in turn. In the first instance, suppose that $\beta(4) =2$. In that case, a player with a value of 4 would win in equilibrium whenever both of their opponents’ values were 2 or lower. Since there are $3$ such values, their equilibrium (expected) pay-off would be $2(3/S)^2$ where $S$ is the number of possible valuations. However, if the player were to deviate by bidding 3, they would win (at least) whenever their opponents’ values were 4 or lower. Since there are 5 such values, their payoff would be at least $(5/S)^2> 2(3/S)^2$. Since the deviation would be strictly profitable, we cannot have $\beta(4) =2$ in an SE.

Let us then consider the second possibility: $\beta(4) =3$. In that case, a player with a value of $4$ would win in equilibrium whenever their opponents’ values were both 3 or lower. Since there are $4$ such values, their equilibrium pay-off would be $(4/S)^2$. However, if the player were to deviate by bidding $2$, they would win whenever their opponents’ values were both 2 or lower and therefore get an expected pay-off of $2(3/S)^2 > (4/S)^2$. Thus, we also cannot have $\beta(4) =3$ in an SE. Since these were the only 2 possibilities, this means that there is no SE in the case of $n = 3$.

We now briefly remark on the assumption about the number of possible values. As noted, the argument assumes that inequality (\ref{eq 1}) holds; i.e., that the number of valuations is large relative to the number of bidders. In our view, this is a reasonable requirement that is satisfied by virtually all real world and experimental auctions. If there are 1000 possible values (a fairly common strategy space in experiments), all this requires is that the number of bidders is fewer than 693. Even if the number of possible values is just 100, all this requires is that the number of bidders is fewer than 70. As an aside, we note that there \textit{is} an SE in the very high bidder case in which each player bids $0$ when $v=0$ and $\beta(v)=v-1$ otherwise. This last observation, incidentally, is consistent with the finding of \cite{dekel2003rationalizable} that an extremely large number of bidders can induce players to bid almost entirely their valuation.

It might be wondered whether this non-existence result survives as soon as ties are introduced. In fact, non-existence is common under either tie breaking rule (see the appendix for our full analysis of this case). Indeed, in some cases, allowing for ties can make things even worse. For example, in the case of 2 bidders (which had an SE in the no-ties case), SE now fail to exist if the number of possible values is odd.

In general, these results contradict assertions found in a number of experimental papers.
For example,  \cite{cox1982theory} experimentally simulate a first-price auction with two bidders, an odd number of values and fair tie-breaking; and assert (based on a lengthy discussion of the continuous case) that has a symmetric, pure strategy equilibrium (see their equation 3.5). As it turns out, however, their games have no pure SE at all --- and therefore cannot have the particular pure SE that they identify.

\vspace{-0.1cm}
\section{The all-pay auction} \label{the-all-pay-auction}
We now turn to the all-pay auction. As before, we begin with some useful lemmas.
\begin{lemma}\label{lemma 7}
The set of equilibria is invariant to $\delta$.
\end{lemma}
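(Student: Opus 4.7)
The plan is to mirror the argument sketched just after Lemma~\ref{lemma 1}: varying $\delta$ merely applies a positive linear rescaling to payoffs, which leaves best responses (and hence the equilibrium set) unchanged once we identify strategy profiles under the obvious bijection. Since the all-pay payoff $v_i\,\text{P}(i \text{ wins}) - b_i$ is jointly linear in $v_i$ and $b_i$, exactly as the first-price payoff $(v_i-b_i)\text{P}(i\text{ wins})$ is, the same reasoning should carry over with only notational changes.

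First, I would set up the bijection. For a game with discretisation $\delta$ and strategy space $\text{X}_\delta=\{0,\delta,\ldots,x\delta\}$, identify each strategy $\beta^\delta\colon \text{X}_\delta\to \text{X}_\delta$ with the unit-scale strategy $\tilde\beta\colon\{0,\ldots,x\}\to\{0,\ldots,x\}$ defined by $\tilde\beta(k)=\beta^\delta(k\delta)/\delta$. The distributional assumption on values is invariant under this relabelling, and the probability that bidder $i$ wins with bid $k\delta$ against opponents playing $\beta^\delta_{-i}$ equals the corresponding winning probability in the unit-scale game against $\tilde\beta_{-i}$ under either tie-breaking rule, since winning depends only on the ordering of the integer indices and on the pattern of ties.

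Second, I would compute how the payoff transforms. Writing $v_i = k\delta$ and $b_i = j\delta$, the all-pay payoff becomes
\[
\pi(v_i,b_i) \;=\; k\delta\,\text{P}(i\text{ wins}) \;-\; j\delta \;=\; \delta\bigl(k\,\text{P}(i\text{ wins}) - j\bigr),
\]
which is exactly $\delta$ times the payoff obtained in the unit-scale game. Since $\delta>0$, this positive multiplicative rescaling preserves the ranking of payoffs, so $\beta^\delta_i$ is a best response to $\beta^\delta_{-i}$ if and only if its unit-scale counterpart $\tilde\beta_i$ is a best response to $\tilde\beta_{-i}$. Equilibrium profiles therefore correspond under the bijection, establishing the claim.

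I would expect the main ``obstacle'' to be purely clerical: checking that the correspondence respects the distribution of values and that tie-breaking probabilities depend only on which indices coincide and not on their magnitudes. There is no substantive difficulty here, because the linearity of the all-pay payoff in $(v,b)$ makes the positive rescaling argument go through essentially word-for-word as in Lemma~\ref{lemma 1}; as a consequence, we may again normalise $\delta=1$ without loss of generality throughout the subsequent analysis.
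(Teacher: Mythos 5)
Your proposal is correct and matches the paper's approach: the paper's proof of Lemma~\ref{lemma 7} simply states that the argument is a straightforward variant of the one given for Lemma~\ref{lemma 1}, and your write-up is exactly that variant, using the homogeneity of the all-pay payoff $\delta(v_i\,\text{P}(i\text{ wins}) - b_i)$ in place of the factorisation $\delta(v_i-b_i)\text{P}(i\text{ wins})$. No substantive differences.
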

\begin{lemma}[No Dominated Strategies.]\label{lemma 8}
Suppose that a bidding function $\beta$ is chosen in an equilibrium. Then $\beta(0)=0$ and $\beta(v) \leq v - 1$ for all $v \geq 1$.
\end{lemma}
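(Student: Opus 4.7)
The plan is to show that for every value $v \geq 1$ and every bid $b$ with $b \geq v$, the bid $b$ is weakly dominated at value $v$ by the bid $0$ (strictly so when $b > v$); together with an elementary argument at $v=0$, the restriction to undominated strategies then yields both parts of the lemma.

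First I would dispose of $v=0$: any $b > 0$ yields payoff $0 \cdot \mathrm{P}(i\text{ wins}) - b = -b < 0$ with certainty (since the bid is paid regardless of the outcome), whereas bidding $0$ yields payoff $0$ with certainty. So every $b>0$ is strictly dominated at $v=0$, forcing $\beta(0)=0$.

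Next, fix $v \geq 1$ and $b \geq v$. For an arbitrary realisation of opponents' bids, let $q_b$ and $q_0$ denote the player's win probabilities when they bid $b$ and $0$ respectively (under whichever tie-breaking rule is in force). The interim payoff difference between bids $0$ and $b$ is
\[
v q_0 - (v q_b - b) \;=\; b - v(q_b - q_0) \;\geq\; b - v \;\geq\; 0,
\]
where the first inequality uses $q_b - q_0 \leq 1$ and the second uses $b \geq v$. When $b > v$ the inequality is strict for every opponent realisation, so bid $b$ is strictly dominated by $0$. When $b = v$, I would exhibit a single realisation making the strict-dominance clause of weak dominance hold: if all opponents bid $v$, then $q_0 = 0$ and $q_v \in \{0, 1/n\}$ depending on the tie-breaking rule, so $q_v - q_0 < 1$ and the difference above is strictly positive. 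Hence bid $v$ is weakly dominated by bid $0$.

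Combining the two cases with the paper's standing restriction to undominated strategies gives $\beta(0)=0$ and $\beta(v) \leq v-1$ for all $v \geq 1$. I do not expect a genuine obstacle here: the only mildly delicate point is verifying the strict-inequality clause of weak dominance at $b=v$ uniformly across both tie-breaking conventions, and the ``all opponents bid $v$'' realisation dispatches this in one line.
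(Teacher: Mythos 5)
Your proof is correct and takes essentially the same route as the paper, which proves this lemma by noting it is a "straightforward variant" of the first-price case (Lemma \ref{lemma 2}): there, too, the key step is that any bid $b \geq v$ is weakly dominated by bidding $0$, with a single opponent realisation supplying the strict-inequality clause. Your unified inequality $b - v(q_b - q_0) \geq b - v$ and the ``all opponents bid $v$'' realisation (which conveniently covers both tie-breaking rules at once) are just a slightly tidier packaging of that same argument.
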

\begin{lemma}[Monotonicity.]\label{lemma 9}
Suppose that a bidding function $\beta$ is chosen in an equilibrium. Then for any two values $v, v' \in \mathbb{X}$, $v' > v$ implies that $\beta(v') \geq \beta(v)$.
\end{lemma}
\begin{lemma}[No jumps]\label{lemma 10}
Suppose that $\beta$ is an SE in the model without ties. Then for every $v, v + 1 \in \mathbb{X}$, $\beta(v+1) \leq \beta(v)+1$.
\end{lemma}
\begin{lemma}\label{lemma 11}
In the model without ties, the SE (if it exists) is unique.
\end{lemma}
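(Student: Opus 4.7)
The plan is to establish uniqueness via strong induction on $v$, applied to two candidate SEs $\beta$ and $\tilde\beta$, showing they must coincide at every value. Crucially, the base case here is \emph{unambiguous}, unlike in Lemma~\ref{lemma 5}: in the first-price case, Lemma~\ref{lemma 2} only required $\beta(1) \leq 1$, leaving $\beta(1) \in \{0,1\}$ free, but here Lemma~\ref{lemma 8} forces $\beta(1) \leq 0$ and hence $\beta(1) = 0$. Together with Lemma~\ref{lemma 9} (monotonicity) and Lemma~\ref{lemma 10} (no-jumps), which imply $\beta(v+1) \in \{\beta(v), \beta(v)+1\}$, we get a binary choice at each inductive step that is resolved by an incentive-compatibility argument.

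For the inductive step, assume $\beta$ and $\tilde\beta$ agree on $\{0, 1, \ldots, v\}$, set $b = \beta(v) = \tilde\beta(v)$, and suppose for contradiction that $\beta(v+1) = b$ while $\tilde\beta(v+1) = b+1$. Writing $G_\beta(k) = |\{w \in \textnormal{X} : \beta(w) \leq k\}|/S$, the expected payoff of a player with value $u$ bidding $b'$ against opponents playing $\beta$ is $u \cdot G_\beta(b'-1)^{n-1} - b'$. The IC conditions at value $v+1$ in $\beta$ (preferring $b$ over $b+1$) and in $\tilde\beta$ (preferring $b+1$ over $b$) yield respectively
\begin{equation*}
(v+1)\bigl[G_\beta(b)^{n-1} - G_\beta(b-1)^{n-1}\bigr] \;\leq\; 1 \;\leq\; (v+1)\bigl[G_{\tilde\beta}(b)^{n-1} - G_{\tilde\beta}(b-1)^{n-1}\bigr].
\end{equation*}

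The crux is then to compare the two bracketed quantities. Since $\beta$ and $\tilde\beta$ agree on $\{0, \ldots, v\}$, and since monotonicity ensures that no value $w \geq v+1$ has bid $\leq b-1$ in either SE (because $\beta(w) \geq \beta(v+1) = b$ and $\tilde\beta(w) \geq \tilde\beta(v+1) = b+1$), we have $G_\beta(b-1) = G_{\tilde\beta}(b-1)$. Moreover, the value $v+1$ itself contributes to $\{w : \beta(w) \leq b\}$ but not to $\{w : \tilde\beta(w) \leq b\}$, so $G_\beta(b) \geq G_{\tilde\beta}(b) + 1/S$ and hence $G_\beta(b)^{n-1} > G_{\tilde\beta}(b)^{n-1}$. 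This makes the left bracketed quantity strictly greater than the right one, directly contradicting the chain of inequalities above. The step I expect to require the most care is precisely this CDF-comparison: one must argue rigorously that the inductive hypothesis plus monotonicity pin down the difference between the two CDFs at $b-1$ and at $b$ to be exactly (and only) the contribution of $v+1$. Everything else—the derivation of the IC inequalities and the fact that $\beta(v+1)$ has just two possible values—follows immediately from the lemmas already in hand.
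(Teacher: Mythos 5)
Your proof is correct and follows the same route the paper takes: the paper establishes Lemma~\ref{lemma 11} by declaring it a ``straightforward variant'' of the Lemma~\ref{lemma 5} argument, and your induction---with the forced base case $\beta(1)=0$, the binary choice from monotonicity and no-jumps, and the contradictory IC inequalities driven by $G_\beta(b-1)=G_{\tilde\beta}(b-1)$ together with $G_\beta(b)>G_{\tilde\beta}(b)$---is exactly that variant, spelled out in full. One small note: your definition $G_\beta(k)=|\{w:\beta(w)\le k\}|/S$ presumes uniform values, whereas the lemma is stated before that assumption is imposed; replacing $1/S$ by the (strictly positive) probability of the value $v+1$ makes the argument distribution-free, as intended.
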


In general, the logic behind these results is very similar to that in the first-price auction. We pause, however, to note one minor contrast. In the all-pay auction, restricting attention to undominated strategies implies that $\beta(0) =\beta(1)=0$. Thus there is no longer a choice to be made about the bid when $v=1$. When we combine this with the inductive argument used to establish Lemma \ref{lemma 5}, we see that the all-pay auction has at most one SE (in undominated strategies) while the first-price auction had at most two.

These preliminaries out of the way, we examine the existence of symmetric equilibria, again assuming uniformly distributed valuations so that our results speak to the majority of experimental auctions. In this case, we find an even more dramatic contrast between discrete auctions and their continuous counterparts:
\begin{proposition}\label{prop 4}
Suppose that $x \geq 10$. Then, regardless of the number of bidders and tie-breaking rule, there are no SE in the all-pay auction.
\end{proposition}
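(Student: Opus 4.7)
The plan is to argue by contradiction: suppose a symmetric equilibrium $\beta$ exists. By Lemmas~\ref{lemma 8} and~\ref{lemma 9}, $\beta(0) = \beta(1) = 0$ and $\beta$ is weakly increasing, while in the no-ties case Lemma~\ref{lemma 10} additionally rules out jumps. Writing $K := \beta(x)$ for the highest equilibrium bid and $v_k := \min\{v \in \text{X} : \beta(v) \geq k\}$ for the start of each plateau, Lemma~\ref{lemma 8} yields $v_k \geq k+1$, so that $v_K \leq x$.

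The heart of the proof is an optimality check at the \emph{top} of the bid function. At $v = x$, a deviation to any bid $K' \in \{K+1,\ldots,x-1\}$ wins with certainty in any SE (no opponent ever bids above $K$) and yields payoff $x - K'$, maximised at $K' = K + 1$. Requiring $\beta(x) = K$ to weakly beat this deviation forces $P(\text{win} \mid b = K) \geq 1 - 1/x$. In the no-ties case this reads $(v_K/S)^{n-1} \geq (x-1)/x$, which for any $n \geq 3$ and $x \geq 2$ cannot be satisfied with $v_K \leq x$ --- an immediate contradiction. For $n = 2$ without ties it instead pins down $v_K = x$.

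I then invoke the adjacent-plateau indifference conditions --- bid $K$ weakly preferred at $v = v_K$, bid $K-1$ weakly preferred at $v = v_K - 1$ --- to sandwich $v_K - v_{K-1}$ in a narrow interval. For the $n = 2$ no-ties case this yields $v_K - v_{K-1} \in \bigl[(x+1)/x,\,(x+1)/(x-1)\bigr] \subset (1,2)$ whenever $x \geq 4$, leaving no integer solution. For the model with ties the winning-probability formula becomes $\bigl((G + H)^n - G^n\bigr)/(n H)$ (with $G := P(b_j < K)$ and $H := P(b_j = K)$); the weaker top-deviation bound now admits $v_K \in \{x-1, x\}$, but a direct arithmetic calculation shows that in either sub-case and for every $n \geq 2$ the resulting indifference interval for $v_{K-1}$ contains no integer once $x \geq 10$. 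This is precisely where the threshold in the statement comes from.

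The one remaining scenario is $K = x - 1$, where $K + 1 = x$ is dominated and so the top-deviation argument does not apply. Here monotonicity together with $\beta(v) \leq v - 1$, $\beta(0) = \beta(1) = 0$, and (in the no-ties case) no jumps force $\beta(v) = v - 1$ for all $v \geq 1$; but at $v = 2$ the payoff from bid $1$ is $2(2/S)^{n-1} - 1$, strictly negative whenever $S \geq 5$, contradicting optimality against bid $0$. The principal obstacle will be the ties-case calculation: the winning-probability formula is delicate and, without a no-jumps lemma to appeal to, one must verify that the narrow-interval phenomenon holds uniformly across both $v_K = x$ and $v_K = x - 1$, across every $n \geq 2$, and without being able to assume a clean plateau structure.
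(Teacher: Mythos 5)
Your treatment of the model without ties is essentially sound and, for $n=2$, takes a genuinely different route from the paper: where the paper enumerates the four possible endings $\beta(x-2),\beta(x-1),\beta(x)$ of the bidding function and kills each with a local deviation, you pin down $v_K=x$ via the top deviation and then sandwich the length of the $(K-1)$-plateau in $[(x+1)/x,\,(x+1)/(x-1)]$, which contains no integer once $x\ge 4$. That argument checks out, as does the $n\ge3$ computation $(x/(x+1))^{n-1}<(x-1)/x$ (which is the paper's ``win for sure'' deviation in different clothing), and your separate handling of $K=x-1$ is fine in the no-ties model, where Lemma \ref{lemma 10} forces $\beta(v)=v-1$ and the value-$2$ bidder then earns $2(2/S)^{n-1}-1<0$.

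The genuine gap is the model with ties, which is where the paper spends most of its effort and where your proposal reduces to the assertion that ``a direct arithmetic calculation shows'' the indifference interval contains no integer for every $n\ge2$. Three things go wrong. First, without a no-jumps lemma the plateau at bid $K-1$ may be empty, so the quantity $v_K-v_{K-1}$ you want to sandwich need not be a plateau length at all; the paper proves a separate ``no jumps at the top'' lemma for $n=2,3$ precisely to restore this structure, and you name this obstacle without resolving it. Second, the case structure is not uniform in $n$ as you suggest: for $n=3$ the top-deviation bound already eliminates $v_K=x-1$, and for $n\ge4$ it eliminates every feasible $v_K$ (the paper does this via the closed form $s(n,p)=(1-(1-p)^n)/(np)$ and its monotonicity in $n$), so no indifference interval is ever consulted there; the claim that the interval argument closes the ties case for all $n\ge2$ has not been verified and is not how the proof actually terminates for $n\ge3$. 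Third, your $K=x-1$ scenario is dispatched only in the no-ties model; with ties you cannot conclude $\beta(v)=v-1$. Until the $n=2$ and $n=3$ ties calculations are actually carried out and the jump issue is handled, the proposal establishes only the no-ties half of the proposition.
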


To (very partially) illustrate the logic of the argument, consider the case of the model without ties and let $n =2$. Where previously we studied the start of the bidding function, now we study the end. By Lemmas \ref{lemma 9} and \ref{lemma 10}, there are only a limited number of ways in which the bidding function can end. For example, it might be flat over the last three values or it might twice increase. It turns out, however, that no matter how it ends, no bidding function is a best response to itself. For example, if one bidder increased their bid twice, then the other should reduce their bid by 1 when their value is the maximum.

As before, Proposition \ref{prop 4} contradicts assertions by experimenters that discrete all-pay auctions are characterised by pure-strategy SE. For example, \cite{noussair2006behavior} conduct an experimental all-pay auction with $n=6$, 1000 possible bids/values and a fair tie breaking rule. While recognising that their game is discrete, they claim that it possesses the pure-strategy SE of continuous theory since the number of possible bids/values is large. In fact, we see that the situation is the opposite: it is precisely because the number of values is large that a pure-strategy SE does not exist.\footnote{In cases with a handful of possible bids and valuations, it is easy to verify that SE do exist. For instance, if $x = 2$, then (regardless of the number of bidders) there is an SE in which $\beta(0) = 0$, $\beta(1) = 0$ and $\beta(2) = 1$.} Moreover, Proposition \ref{prop 4} corrects their claim (see footnote 3) that discrete auction games differ from their continuous counterparts because in the discrete case there is a strictly positive probability of winning through a tie. In fact, our difference arises even in the model without ties (when winning through a tie is impossible).
\vspace{-0.1cm}
\section{Asymmetric equilibria} \label{asymmetric-equilibria}

In principle, discrete auctions could have asymmetric equilibria that closely resemble the symmetric equilibria of their continuous counterparts. To investigate this possibility, we take a computational approach, first representing the games in normal form and then finding their equilibria using standard solvers \citep{mckelvey2014gambit}. Since Bayesian games can be represented in sequence form, this approach is not computationally efficient \citep{von1996efficient}. However, the normal form representation allows us to apply standard algorithms which are guaranteed to find all pure-strategy equilibria and therefore establish the non-existence of pure-strategy equilibria in cases where the algorithmic search is unsuccessful.

Computing the equilibria of discrete auctions is challenging since the number of possible strategies is so large. Indeed, in a game with $k$ possible bids and valuations, each player has $k^k$ possible pure strategies (so the number of strategies grows super-exponentially). Nonetheless, we are able to significantly reduce the dimensionality of our problem through iterated deletion of dominated strategies. In the first stage, we delete strategies that involve bids above valuations since these strategies are weakly dominated. In subsequent stages, we iteratively delete strategies that are now strictly dominated.\footnote{For the details, see: https://github.com/cptanalatriste/discrete-auction-builder/tree/master/docs} In other words, we delete strategies using the Dekel-Fudenberg procedure \citep{dekel1990rational}. While not designed for this purpose, this procedure is guaranteed to preserve all equilibria in undominated strategies.\footnote{Clearly, the first round must preserve all equilibria in undominated strategies. Moreover, the subsequent rounds (in which strictly dominated strategies are deleted) are guaranteed to preserve all equilibria (in undominated strategies or otherwise).} Furthermore, it reduces the number of strategies quite dramatically and therefore allows us to construct payoff matrices with hundreds of thousands of entries where previously the number of entries would have run into the billions.

We then compute the equilibria of a variety of auction games. We vary the auction structure (first-price vs. all-pay), number of bidders (two vs. three), tie breaking rule and number of possible valuations. Throughout, we assume uniformly distributed values. In general, we avoid searching for equilibria in cases where we have already identified symmetric equilibria analytically (see Sections \ref{The First-Price Auction} -- \ref{the-all-pay-auction}).\footnote{Our code is available here: https://github.com/cptanalatriste/bayesian-game-builder}

In some cases, we do find asymmetric equilibria that closely resemble the symmetric equilibria of continuous theory. For example, consider a first-price auction with three bidders and no-ties; and suppose that the number of valuations is \textit{not} a multiple of three. Then (generalising one of our computational results) it easy to check that there is an asymmetric equilibrium in which players submit bids that are very close to the continuous equilibrium of $\beta(v) = 2v/3$ (see Figure 1 for an illustration).\footnote{Let $m$ denote an arbitrary multiple of $3$. In this equilibrium, the first bidder sets $\beta(0) = \beta(1) = 0$ and then $\beta(m) = 2m/3$, $\beta(m - 1) = \beta(m) - 1$ and $\beta(m + 1) = \beta(m) + 1$ (for all $m \geq 3$). The second bidder sets $\beta(0) = \beta(1) = 0$, $\beta(2) = 1$, $\beta(3) = \beta(4) = 2$; and then $\beta(m - 1) = \beta(m) = \beta(m+1) = 2m/3$. The third bidder sets $\beta(0) = \beta(1) = 0$, $\beta(2) = 1$, $\beta(3) = 2$; and then sets $\beta(m) = \beta(m - 1) = \beta(m - 2) = 2m/3 - 1$.} In this case, the discrete auction lacks symmetric equilibria (by Proposition \ref{prop 3}), but allowing for asymmetry provides a way of recovering continuous results.

\begin{figure}[h]
\centering
\begin{minipage}{0.8\textwidth} 
\caption{An asymmetric equilibrium with $n = 3$}
\vspace{-0.4cm}
\includegraphics[width = 13cm]{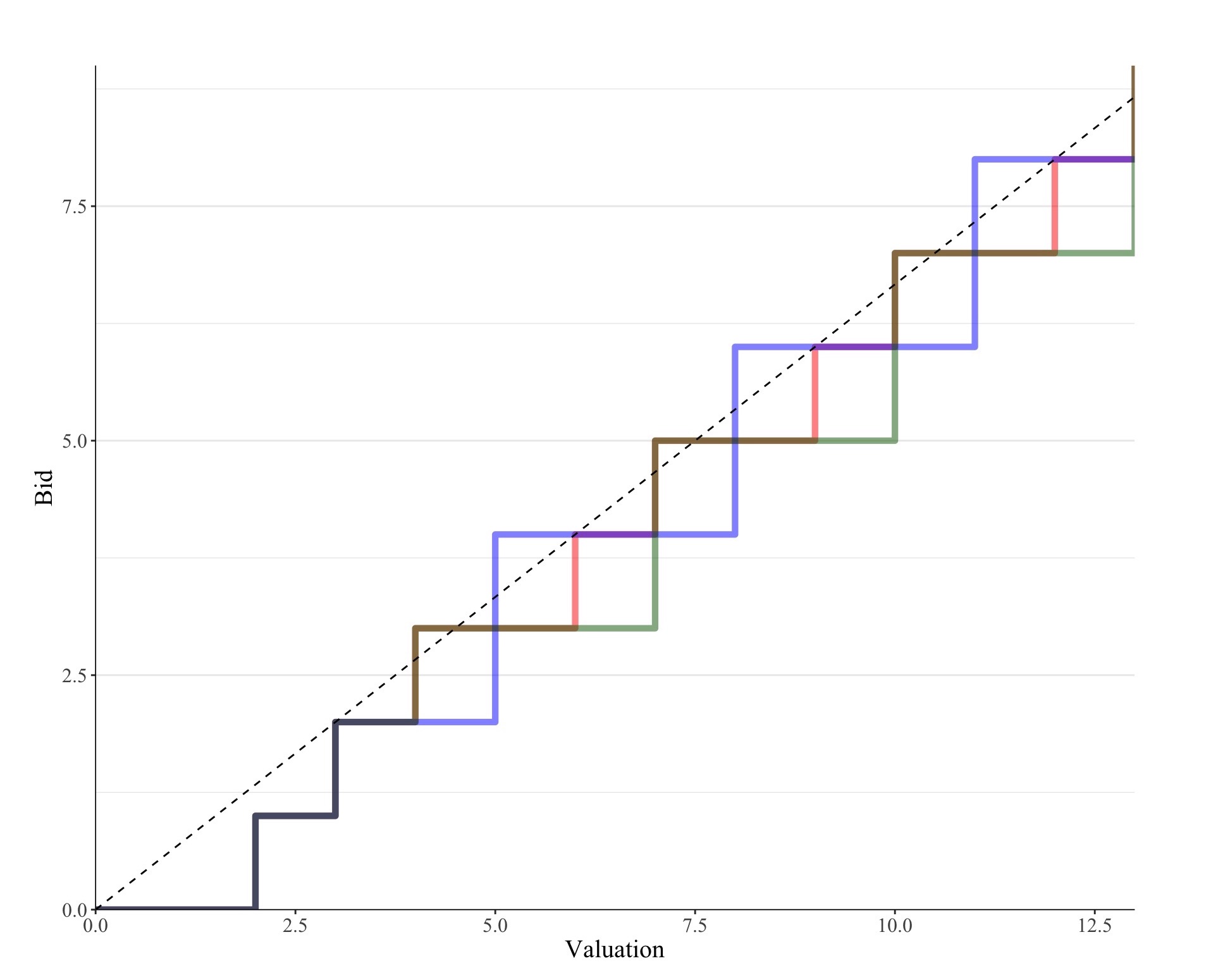}
\begin{minipage}{12.4cm}%
   \small \textit{Notes}: Figure 1 plots the asymmetric equilibrium discussed in the text for the case of $x = 13$. The dotted line depicts the continuous equilibrium of $\beta(v) = 2v/3$.%
\end{minipage}%
\end{minipage}
\end{figure}


In general, however, the analysis further emphasises the discontinuities highlighted in the previous section. Indeed, in six out of the seven types of auction game that we consider, we find examples in which the auction has no strategy equilibria whatsoever (see appendix Tables \ref{tab 1} and \ref{tab 2} for the full results). While we were only able to solve relatively small auctions, we conjecture (based on the results of Sections \ref{The First-Price Auction} -- \ref{the-all-pay-auction}) that non-existence would also be common for auctions with larger numbers of values.

\section{Mixed equilibria} \label{sec:Mixed}

The previous analysis demonstrates that discrete auctions can lack pure strategy equilibria, symmetric or otherwise. In such cases, it follows from \cite{nash1950equilibrium} that their only equilibria are in mixed strategies. It is natural to conjecture that, although these mixed equilibria differ qualitatively from the pure equilibria of textbook theory (insofar as they involve randomisation), they might match textbook theory quantitatively since the randomisation might be quite local. For example, if the textbook (continuous) theory predicts that a player should submit a bid of $6.5$, it is reasonable to guess that the discrete model with integer bids will predict randomisation between bids of 6 and 7. In this section, we show that this is emphatically \textit{not} the case. In fact, as the next result indicates, randomisation can take place over almost the entire feasible range of bids.

\begin{proposition}\label{mixed}
Consider the all-pay auction without ties. If the number of bidders is sufficiently large, players with a value of $x$ must submit all bids between $0$ and $x - 1$ with positive probability in any symmetric equilibrium.
\end{proposition}

\begin{figure}[h]
\centering
\begin{minipage}{0.8\textwidth} 
\caption{A mixed equilibrium with $n = 10$}\label{fig2}
\vspace{-0.4cm}
\hspace{-4em}\includegraphics[width = 16cm]{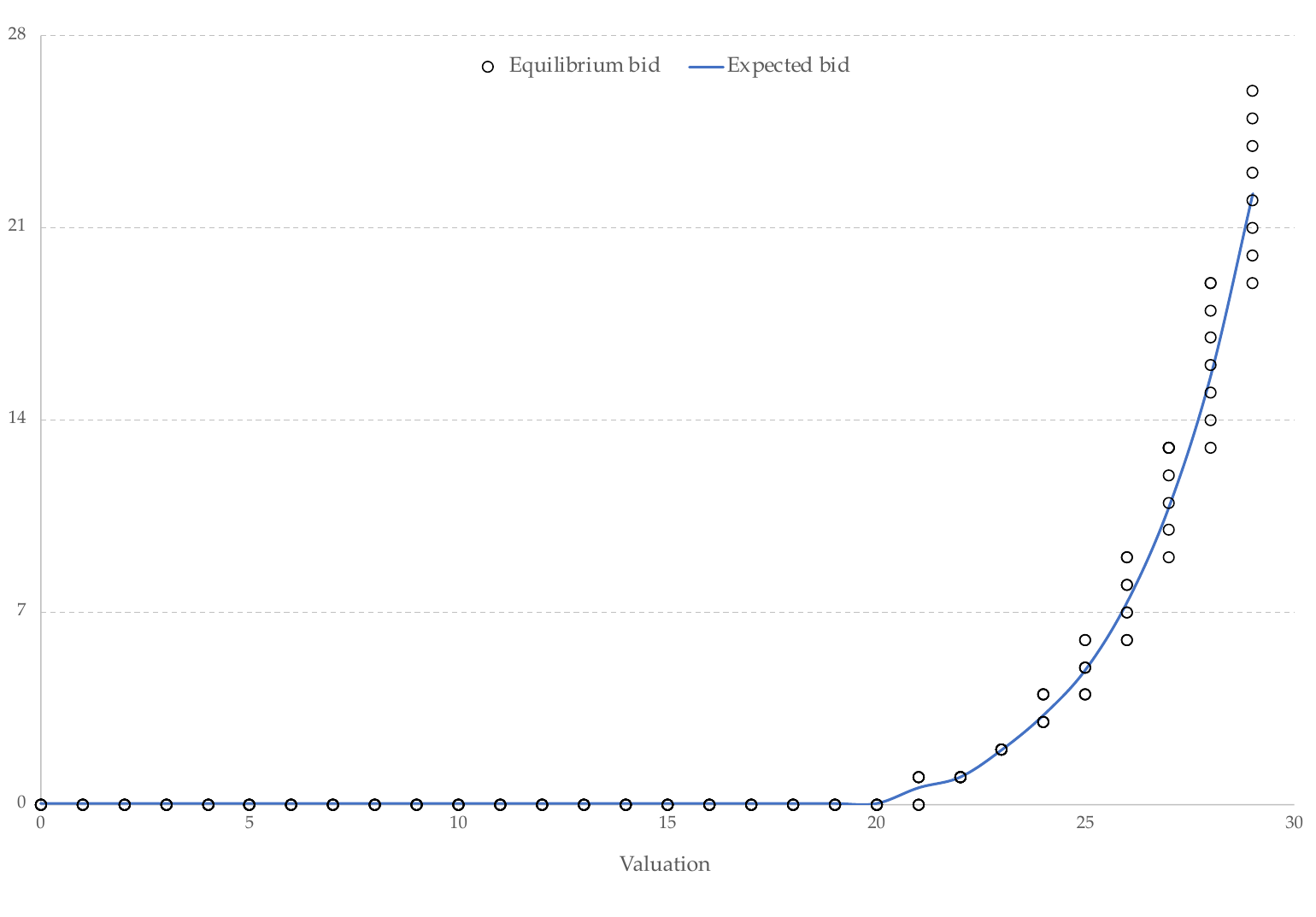}
\end{minipage}
\end{figure}

To understand the logic of Proposition \ref{mixed}, first consider players who do not have the maximum valuation, i.e. those for whom $v \leq x - 1$. As the number of players becomes large, such types become very unlikely to win in equilibrium since it becomes almost certain that one of their opponents will have drawn the maximum valuation. Thus, if the number of players is `sufficiently large' (a sufficient condition is implicit in our proof), such types must bid 0 in any symmetric equilibrium. Given the logic of Lemmas \ref{lemma 7}--\ref{lemma 10}, it then follows that players with the highest valuation will randomise over some interval of bids $0$, $1$, $2$, ..., up to some maximum bid $k$. If the number of players is large, such types will be tempted to win the auction for sure by `over-cutting' the maximum, i.e. submitting a bid of $k + 1$. The only way to prevent this (in the large $n$ limit) is by setting $k$ at $x-1$, which means that no profitable overcutting can occur.

\begin{figure}[h]
\centering
\begin{minipage}{0.8\textwidth} 
\caption{An equilibrium with normally distributed values}\label{fig3}
\vspace{-0.4cm}
\hspace{-4em}\includegraphics[width = 16cm]{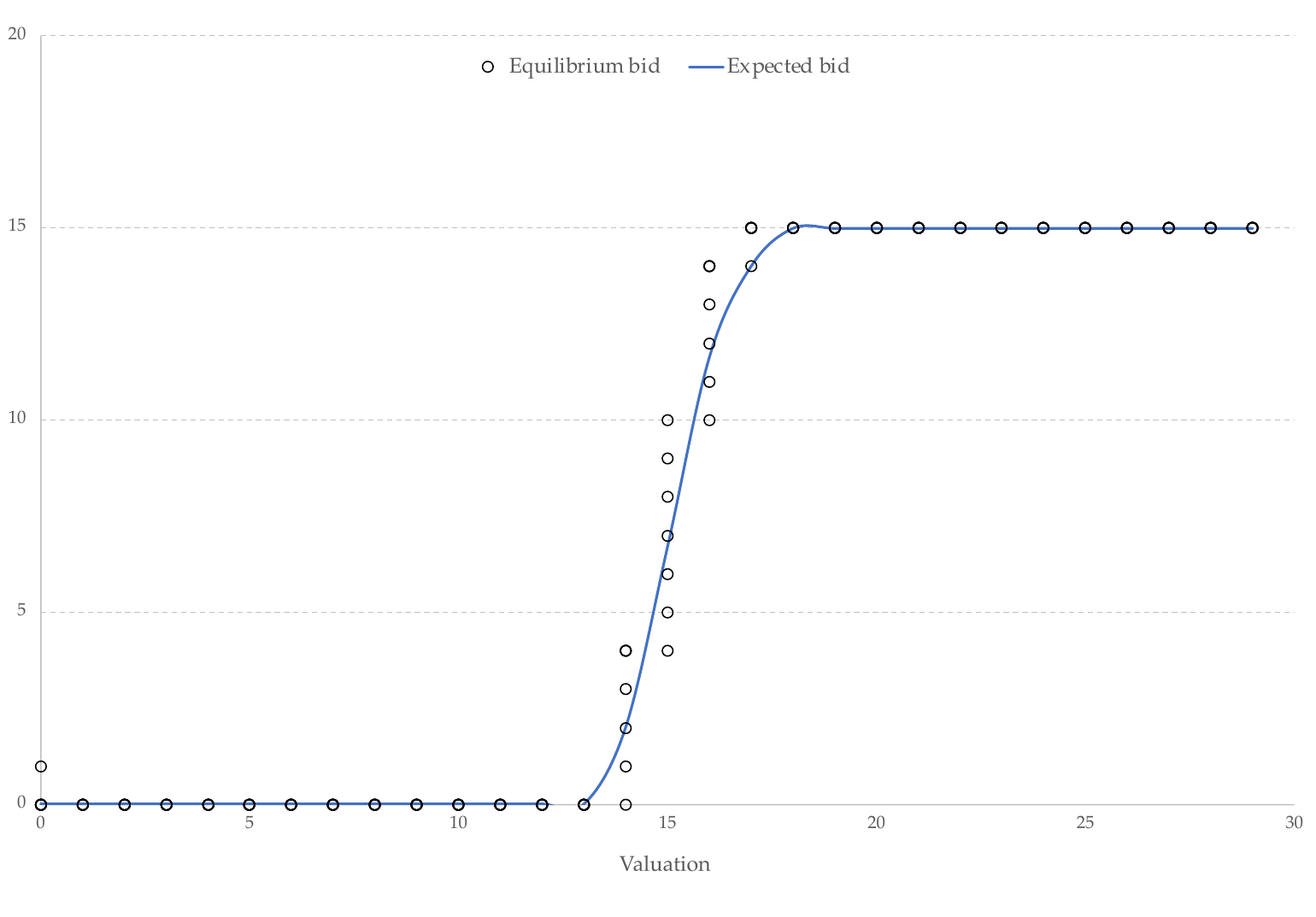}
\end{minipage}
\end{figure}

To illustrate the result, consider an all-pay auction with $30$ bids and valuations (i.e. a maximum of $x = 29$). One can then check that if $n \geq 101$, we do indeed obtain maximal randomisation in line with Proposition \ref{mixed}. Of course, the idea of 101 bidders may seem rather extreme; but it is important to emphasise that large amounts of randomisation can still be generated with fewer bidders. For example, if there are just 10 bidders, then players with the maximum valuation ($29$) will randomise between bids of $19$, $20$, ..., $26$. Meanwhile, players with a valuation of $28$ will randomise between bids of $13$, $14$, ... , $19$, and so forth (see Figure \ref{fig2} for illustration). Thus, randomisation need not be local; and one need not assume implausibly large numbers of bidders to obtain this result.

We should also emphasise that instead of assuming a large number of bidders, one could equally appeal to some substantial non-uniformity in the distribution of valuations. This should not come as a surprise: as far as the distribution of the maximum valuation goes, there is no difference between a model with two bidders with valuations drawn from distribution $F^{n/2}$, and a model with $n$ bidders with valuations drawn from distribution $F$. Moreover, the large scale randomisation need not arise at the maximum valuation. To illustrate this, consider an all-pay auction with just two bidders; and suppose that the valuations are normally distributed with a mean of $15$ and a standard deviation of $1$ (we choose this example since the normal distribution might be a sensible choice for applications). Such a distribution concentrates probability mass in the $14$ to $16$ range, so one would expect it to produce randomisation at such valuations. Consistent with this, we find that such types randomise over a number of bids: for example, players with a value of $15$ are predicted to randomise over all integers between $4$ and $10$ inclusive (see Figure \ref{fig3} for illustration). 

\section{Concluding remarks} \label{sec:Concluding remarks}
In this paper, we have made some progress towards understanding how auction theory changes once bids and valuations made discrete. In some cases, we find that introducing discreteness changes little. In others, however, we uncover discontinuity results. For example, in the all-pay auction with uniform values, introducing an arbitrarily small amount of discreteness makes the symmetric, pure-strategy equilibria of continuous theory disappear; and can make the pure-strategy equilibria disappear altogether. We have also shown that the remaining mixed equilibria can involve randomisation over the full range of non-dominated bids. In closing, we discuss the implications of these discontinuity results for theoretical and experimental research.

One conclusion for auction experimenters is that auction theory cannot be straightforwardly applied to the games they study. Indeed, ignoring the discreteness of laboratory games has led experimenters to make a series of errors (as documented in Sections \ref{The First-Price Auction} - \ref{the-all-pay-auction}). In future, experimenters may wish calculate the equilibria of their games directly, perhaps taking a computational approach (along the lines of Section \ref{asymmetric-equilibria}) or otherwise using our monotonicity, no-jumps and uniqueness lemmas to identify all \textit{symmetric} equilibria in auction games without ties.\footnote{This algorithm is implemented in a companion paper \cite{rasooly} and is available here: \url{https://auctionsolver.herokuapp.com/}}

For theorists, the implications are rather different. Over the last few decades, considerable effort has been expended in attempting to establish the existence of pure-strategy equilibria in successively more general auction models (see \cite{10.2307/2566961}, \cite{athey2001single} and
 \cite{reny2004existence} for notable examples). As our findings make clear, however, the existence of pure-strategy equilibria is far from guaranteed as soon as bids and valuations are made discrete. This is an issue since, as noted, discreteness is an inevitable feature of any real world or experimental auction and continuity approximations can be justified (only) as harmless assumptions that ease the analysis without substantially changing the results.

As discussed, the games we study do have equilibria: this is an immediate consequence of  \cite{harsanyi1967games} and \cite{nash1950equilibrium}. The upshot, however, is that these equilibria are often in mixed strategies. The question then arises of whether it is plausible to suppose that players really do randomise when bidding in auctions (and in just the way required by auction theory) even though they have no strict incentive to do so. While we cannot address this question in full, we briefly note three points that may be relevant.

First, the `classical' justification of mixed strategies --- that are necessary to keep opponents guessing about what a player will do \citep{von1947theory} ---  would appear to have even less force than usual in this context (see \cite{rubinstein1991comments} for a general criticism of this type of justification). In auction games with plausible type spaces, individuals will already have very little idea about their opponent’s valuation. Thus, even if they somehow knew their opponents’ bidding functions, they would already have substantial uncertainty about their opponents’ bids. That is to say, it is not clear that people would ever feel the need to submit random bids in order to `keep their opponents guessing'.

Second, even if individuals do choose to randomise in some environments --- as suggested, for instance, by \cite{chiappori2002testing} --- it is not clear that they do so in auction contexts. Of course this is an empirical question and one that should be tested by empirical studies of auction behaviour. Nonetheless, we conjecture that the results of such studies will find little evidence of random bidding by auction participants.

Third, introducing randomisation would appear to increase the cognitive requirements presupposed by equilibrium models. In a pure-strategy Bayes-Nash equilibrium, players are assumed to (i) know the joint distribution of opponent valuations, (ii) know which bidding functions their opponents have chosen, (iii) infer the joint distribution of opponent bids, (iv) infer the distribution of the highest opponent bid, and finally (v) calculate the optimal bid taking this distribution as given.\footnote{Technically speaking, these steps are sufficient but not necessary for individuals to play a Bayes-Nash equilibrium. For instance, they might hold incorrect beliefs about their opponents' decision rules, fail to optimise correctly and then submit equilibrium bids because these mistakes end up cancelling out. In practice, however, it is very hard to see why anyone would ever play a Bayes-Nash equilibrium without proceeding through these steps.} Once we introduce randomisation, this already difficult task becomes even harder. For now, players need to combine their opponents' probability distribution over bidding functions with their probability distribution over valuations in order to arrive at the distribution over bids (in addition to all the other steps). In other words, players now best respond to two sources of randomness: random valuations and random bidding functions. It is unclear whether we can expect this from even relatively sophisticated auction participants.

It seems, then, that our results should push one in the direction of greater scepticism towards auction theory. Less controversially, however, they highlight the need for further research on discrete auctions. Despite the large number of papers on auction theory, the world of discrete auctions remains almost completely unexplored. We hope that the results of this paper will stimulate and provide the basis for further work in this area.

\clearpage

\begin{appendix}

\setlength{\bibhang}{0pt}
\bibliographystyle{apalike}
\bibliography{bibliography.bib}


\newpage

\section{Tables} \label{sec:tab}

The following tables summarise our computational results on the existence of pure-strategy Bayes-Nash equilibria. The first column (`Valuations') specifies the number of possible valuations and bids in the game. The subsequent columns specify the auction structure and number of bidders. An entry of `yes' means that a pure-strategy Bayes-Nash equilibrium exists; an entry of `no' means that it does not. A blank entry means that we did not search for the equilibrium (either because we had around found symmetric equilibria analytically or because the payoff matrices were too large for this to be computationally feasible).

\begin{table}[h]
\centering
\caption{The existence of pure-strategy equilibria in the model with ties}\label{tab 1}
\label{tab:table1}
\begin{tabular}{|lllll|}
\hline
Valuations & FPSB, $n = 2$ & FPSB, $n = 3$ & All-pay, $n = 2$ & All-pay, $n = 3$ \\ \hline
6 & - & Yes & Yes & No \\
7 & - & No & No & No \\
8 & - & Yes & No & - \\
9 & - & - & Yes & - \\
10 & No & - & - & - \\ \hline
\end{tabular}
\end{table}

\begin{table}[h]
\centering
\caption{The existence of pure-strategy equilibria in the model without ties}\label{tab 2}
\label{tab:table2}
\begin{tabular}{|lllll|}
\hline
Valuations & FPSB, $n = 2$ & FPSB, $n = 3$ & All-pay, $n = 2$ & All-pay, $n = 3$ \\ \hline
6          & -           & Yes         & Yes            & Yes            \\
7          & -           & Yes         & No             & No             \\
8          & -           & -           & No             & -              \\
9          & -           & -           & Yes            & -              \\
10         & -           & -           & -              & -              \\ \hline
\end{tabular}
\end{table}

\clearpage




\section{Proofs} \label{sec:appendixa}

\begin{proof}[Proof of Proposition \ref{prop 1}]
In the model with ties, a standard argument establishes that the bidding function $\beta(v) = v$ weakly dominates all other bidding functions. Thus, each player sets $\beta(v) = v$ in the unique equilibrium in undominated strategies.

In the model without ties, the strategies $\beta(v) = v$ and $\beta(v) = v + \delta$ necessarily yield the same payoff (the only difference is that bidding $v + \delta$ allows a player to win when the highest bid of their opponents is $v$; but in that case the payoff from winning is zero). Furthermore, a standard argument establishes that $\beta(v) = v$ and $\beta(v)=v+\delta$ weakly dominate all other bidding functions. Thus, the set of equilibria in undominated strategies is precisely the set of strategy profiles in which each player sets either $\beta(v) = v$ or $\beta(v)=v+\delta$.\end{proof}

\begin{proof}[Proof of Lemma \ref{lemma 1}]
Consider the game in which bids and values are restricted to $\mathbb{X} = \{0, \delta, 2\delta, ..., x\delta\}$ for some $\delta > 0$. Thus, each player $i$ has a value of $\delta v_i$ for some $v_i \in \{0, 1, ..., x\} \equiv \mathbb{X}' $ and must choose to bid $\delta b_i$ for some $b_i \in \mathbb{X}'$. Consider initially the model without ties. In equilibrium, each player chooses their bid $\delta b_i$ so that, for all $b_i' \in \mathbb{X}'$,
\begin{equation}
\begin{split}
& (\delta v_i - \delta b_i)\mathbb{P}\text{(win}|\delta b_i) \geq (\delta v_i - \delta b_i')\mathbb{P}\text{(win}|\delta b_i') \\ & \iff
(v_i - b_i)\mathbb{P}\text{(win}|\delta b_i) \geq (v_i - b_i')\mathbb{P}\text{(win}|\delta b_i')
\\ & \iff
(v_i - b_i)\mathbb{P}(\delta b_i > \delta b_j \hspace{0.2em} \forall j \neq i) \geq (v_i - b_i')\mathbb{P}(\delta b_i' > \delta b_j \hspace{0.2em} \forall j \neq i)
\\ & \iff
(v_i - b_i)\mathbb{P}(b_i > b_j \hspace{0.2em} \forall j \neq i) \geq (v_i - b_i')\mathbb{P}(b_i' > b_j \hspace{0.2em} \forall j \neq i).
\end{split}
\end{equation}
Clearly, this inequality does not depend on $\delta$. Thus, if there is an equilibrium in the game with $x$ values and discretisation $\delta > 0$ in which player $i$ bids $b\delta$ when her valuation is $v\delta$, then there is an analogous equilibrium in the game with $x$ values and arbitrary discretisation $\delta' > 0$ in which player $i$ bids $b\delta'$ when her valuation is $v\delta'$. A similar argument establishes that the set of equilibria is invariant to $\delta$ in the model with ties.\end{proof}

\begin{proof}[Proof of Lemma \ref{lemma 2}]
Consider first the model with ties. For all $v$, bidding $b>v$ is weakly dominated by $b=0$: bidding $b = 0$ ensures a non-negative payoff, whereas bidding $b > v$ leads either to a zero payoff (if the player loses) or a negative payoff (if the player wins). Hence, $\beta(v) \leq v$ in any equilibrium in undominated strategies; and therefore $\beta(0) = 0$. Moreover, for any $v \geq 1$, bidding $b = 0$ also weakly dominates bidding $b = v$ (since bidding $b = 0$ can lead to a strictly positive payoff if all other players bid $0$). Thus, $\beta(v) \leq v - 1$ for all $v \geq 1$ in any equilibrium in undominated strategies.

Similar arguments apply in the model without ties. As before, bidding $b>v$ is weakly dominated by $b=0$: the latter ensures a payoff of zero, whereas the former yields either a negative or zero payoff. Hence, $\beta(v) \leq v$ for all $v$ in any equilibrium in undominated strategies; which implies that $\beta(0) = 0$ and $\beta(1) \leq 1$. Suppose now that $v \geq 2$. If a player bids $v$, their payoff is necessarily zero. However, if they bid $1$, their payoff is either zero (if their opponents all bid $1$ or more) or positive (if their opponents all bid $0$). Since $\beta(0) = 0$, there is a strictly positive probability that all opponents bid zero. Hence, bidding $b = 1$ now strictly dominates $b = v$, which implies that $\beta(v) \leq v-1$ for $v \geq 2$.\end{proof}

\begin{proof}[Proof of Lemma \ref{lemma 3}]
See \cite{goeree2002quantal} Proposition A1.\end{proof}

\begin{proof}[Proof of Lemma \ref{lemma 4}]
See the main text.\end{proof}

\begin{proof}[Proof of Lemma \ref{lemma 5}]
Since players avoid dominated bids, $\beta(0) = 0$ and either $\beta(1) = 0$ or $\beta(1) = 1$. Moreover, it is obvious that $\beta(2) = 1$ in any SE. We will now argue that the SE bidding function up to and including an arbitrary valuation $v \geq 2$ uniquely determines the SE bidding function up to and including valuation $v + 1$. By induction, the SE is therefore uniquely implied by our choice of $\beta(1)$.

To do this, let $b$ denote the bid submitted when the value is $v \geq 2$. By monotonicity and no-jumps, there are only 2 possibilities when the value is $v+1$:
\begin{enumerate}
    \item $\beta(v+1) = b$
    \item $\beta(v+1) = b + 1$
\end{enumerate}
Let us examine these possibilities in turn.

First possibility: $\beta(v+1)=b$. Then bidding $b$ with a value of $v+1$ must be at least as good as bidding $b+1$, i.e.
\begin{equation}
\underbrace{(v + 1 - b)\mathbb{P}( \leq b -1)^{n-1}}_A \geq \underbrace{(v - b)\mathbb{P}( \leq b)^{n-1}}_B
\end{equation}
where $\text{P(} \leq k)$ denotes the equilibrium probability that a player bids $k$ or lower.

Second possibility: $\beta(v+1) = b+1$. Then bidding $b+1$ with a value of $v+1$ must be at least as good as bidding $b$, i.e.
\begin{equation}
\underbrace{(v + 1 - b)\mathbb{P}'( \leq b -1)^{n-1}}_{A'} \leq \underbrace{(v - b)\mathbb{P}'( \leq b)^{n-1}}_{B'}
\end{equation}
Now notice that, in both cases, the equilibrium probability of submitting a bid that is $b-1$ or less is the same. That is, $\mathbb{P}( \leq b -1) = \mathbb{P}'( \leq b -1)$ which in turn implies that $A = A'$.

On the other hand, the equilibrium probability of submitting a bid that is $b$ or less does vary across the scenarios. In the first scenario, a player with a value of $v+1$ always bids $b$. In the second, they never bid $b$ and (by monotonicity) neither do players with higher values. Therefore, $ \mathbb{P}( \leq b) > \mathbb{P}'( \leq b)$. Assuming that $v - b > 0$ --- which is without loss of generality since $v \geq 2$ --- this means that $ B > B'$.

Let us now recall our two conditions (and substitute $A = A'$):
\begin{enumerate}
    \item $A \geq B$
    \item $A \leq B'$
\end{enumerate}
Since $ B > B'$, the truth of either condition implies the falsity of the other. Hence, at most one can hold; which means that $\beta(v+1)$ is uniquely implied by $\beta(v)$. By induction, this then means that the SE (if it exists) is uniquely implied by the choice of $\beta(1)$.\end{proof}

\begin{proof}[Proof of Proposition \ref{prop 2}]
To demonstrate existence, we will show that if one player (say player $2$) bids according to one of the two equilibrium strategies, then the other (player $1$) can do no better than follow the same strategy. For simplicity, assume that $x$ is odd (a similar argument applies when $x$ is even and is therefore omitted). If player 2 bids $b_2 = \floor{v_2/2}$ for all $v_2 \in \mathbb{X}$, and therefore never bids more than $\floor{x/2} = (x-1)/2$, then for player 1
\begin{equation}
\begin{split}
\mathbb{P}\text{(win}|b_1) = \mathbb{P}(b_1 > b_2) = 
\begin{cases}
1 & \text{if  } b_1 >(x-1)/2 \\
2b_1/(x + 1) & \text{if  } b_1 \leq (x-1)/2
\end{cases}
\end{split}
\end{equation}
We claim that no bid $b_1 >(x-1)/2$ is ever strictly better than bidding $b_1 = \floor{v/2}$. To see this, it is enough to consider the best such bid, namely $b_1 = (x + 1)/2$; and to show that even the highest type ($v_1 = x)$ would not strictly gain from deviating to such a bid. By bidding $b_1 = (x + 1)/2$ (and winning for sure), such a type would get a pay-off of
\begin{equation}
\pi(v_1 = x, b_1 = (x + 1)/2) = x - \frac{x+1}{2} = \frac{x-1}{2}
\end{equation}
Meanwhile, bidding the candidate SE would yield
\begin{equation}
\pi(v_1 = x, b_1 = \floor{x/2}) = \left(x - \frac{x-1}{2} \right) \mathbb{P}\text{(win}|b_1) =  \left(\frac{x+1}{2}\right)\left(\frac{x-1}{x+1}\right) = \frac{x-1}{2}
\end{equation}
Thus, even the $v = x$ type cannot strictly gain by bidding $b_1 > (x-1)/2$; which means that it is optimal for all types $v \in \mathbb{X}$ to bid $b_1 \leq (x-1)/2$.

Let us then consider bids in the range $b_1 \leq (x-1)/2$; and let us write player $1$'s bid as $b_1 = \floor{v_1/2} + i$ for some $i \in \mathbb{N}$. Given that $b_2 = \floor{v_2/2}$, player 1's expected payoff is
\begin{equation}
\begin{split}
\pi(v_1, b_1) &= (v_1 - b_1)\mathbb{P}(b_1 > b_2) \\
&= (v_1 - \floor{v_1/2}-i)\mathbb{P}(\floor{v_1/2} + i > \floor{v_2/2}) 
\end{split}
\end{equation}
If $v_1$ is even, then $\floor{v_1/2} = v_1/2$, so
\begin{equation}
\begin{split}
\pi(v_1, b_1) &= (v_1 - v_1/2 - i)\mathbb{P}(v_1/2 + i > \floor{v_2/2}) \\
&= \left(\frac{v_1}{2} - i\right)\left(\frac{v_1 + 2i}{x + 1}\right) \\
&= \frac{v_1^2}{2(x+1)} - \frac{2i^2}{x + 1}
\end{split}
\end{equation}
which is maximised at $i = 0$. Thus, $b_1 = \floor{v_1/2}$ is optimal for even $v_1$. Meanwhile, if $v_1$ is odd, then $\floor{v_1/2} = (v_1-1)/2$, so
\begin{equation}
\begin{split}
\pi(v_1, b_1) &= (v_1 - (v_1-1)/2 - i)\mathbb{P}((v_1-1)/2 + i > \floor{v_2/2}) \\
&= \left(\frac{v_1+1}{2} - i\right)\left(\frac{v_1 - 1 + 2i}{x + 1}\right) \\
&= \frac{(v_1-1)(v_1 + 1)}{2(x+1)} + \frac{2i(1-i)}{x + 1}
\end{split}
\end{equation}
which is again maximised at $i = 0$. Thus, $\beta_1(v_1)=\floor{v_1/2}$ is a best response for both odd and even $v_1$; which means that $\beta(v)=\floor{v/2}$ is a pure SE. A similar argument establishes the existence of an SE in which each player sets $\beta(v)=\ceil{v/2}$. Finally, we see that these are the \textit{only} pure SE by applying Lemma \ref{lemma 5}.\end{proof}

\begin{proof}[Proof of Corollary \ref{corollary 6}]
In the continuous case, the unique symmetric equilibrium is given by $\beta(v)=v/2$. In the discrete case with discretisation $\delta$, the SE are $\beta(\delta v)= \floor{v/2}\delta$ and $\beta(\delta v)= \ceil{v/2}\delta$ for all values $\delta v \in \mathbb{X}$ (i.e. for $v = 1, 2, ..., k$). Thus, the difference between a discrete bid and continuous bid is at most $\delta/2 \rightarrow 0$ as $\delta \rightarrow 0$ (holding the valuation $\delta v$ fixed). This establishes the (uniform) convergence of the discrete bidding function.

We now sketch a proof for the convergence of expected revenues. As $\delta \rightarrow 0$, each player's valuation converges in distribution to the valuation in the continuous case (i.e. a continuously and uniformly distributed random variable on the interval $ [ 0, x ] $). By the continuous mapping theorem, each player's bid converges in distribution to the bid in the continuous case (i.e. a continuously and uniformly distributed random variable on $[0, x/2]$). Applying the continuous mapping theorem again, we see that the maximum bid also converges in distribution to its continuous counterpart. Finally, by the bounded convergence theorem, the expected value of the maximum bid also converges to its continuous analogue.\end{proof}

\begin{proof}[Proof of Proposition \ref{prop 3}]
Let us suppose (for contradiction) that there were an SE and begin by studying the start of the bidding function that defines it. As before, $\beta(0)=0$ and $\beta(1)=0$ or $\beta(1)=1$ (since players avoid dominated strategies). Either way, $\beta(2) = 1$ (since bidding 0 or 2 yield a payoff of 0 whereas bidding 1 yields a positive payoff).



Let us now consider the first value at which the bidding function does not increase. To this end, define $k$ as the smallest value such that $\beta(k) = \beta(k-1)$ and $k \geq 2$. First, we argue that $k$ must exist. For contradiction, suppose that it does not, i.e. $\beta(v) = v - 1$ for all $v \geq 2$. In equilibrium, a player with $v=x$ would win whenever all of their opponents have values of $x - 1$ or lower. Since there are $x$ such values, and their payoff in the event of winning is $1$, their equilibrium expected payoff would be
 \begin{equation}
 \left( \frac{x}{x+1} \right)^{n-1}
 \end{equation}
If such a player were to bid one less, they would obtain an expected payoff of
\begin{equation}2\left( \frac{x-1}{x+1} \right)^{n-1}
\end{equation}
which (rearranging) is strictly larger if and only if
\begin{equation}
x>\frac{2^\frac{1}{n-1}}{2^\frac{1}{n-1}-1}.
\end{equation}
But this is precisely what we have assumed. Thus, if there is an SE, and a player with a value of $x$ does \textit{not} want to deviate, then there must be some $k \geq 2$ such that $\beta(k) = \beta(k-1)$.

Next, we locate $k$. In equilibrium, a player with a value $v = k$ bids $v - 2$, thereby winning whenever their opponents all have values of $v - 2$ or lower. Since there are $v - 1$ such values, the equilibrium expected payoff of a player with a value $v = k$ is
\begin{equation}
2 \left( \frac{v-1}{x+1} \right)^{n-1}
\end{equation}
If the player were to bid $1$ more (i.e. were to bid $v - 1$), they would win (at least) whenever their opponents all have values of $v$ or lower. Thus, their expected payoff would be at least
\begin{equation}
\left( \frac{v+1}{x+1} \right)^{n-1}
\end{equation}
Since an equilibrium cannot have strictly profitable deviations,
\begin{equation}\label{eq 2}
2 \left( \frac{v-1}{x+1} \right)^{n-1} \geq \left( \frac{v+1}{x+1} \right)^{n-1}
\end{equation}
We claim that $v = k$ is the smallest integer that satisfies inequality (\ref{eq 2}). To see this, notice that for all values smaller than $k$, submitting the equilibrium bid $\beta(v) = v - 1$ must be at least as good as bidding $v - 2$. This means that, for any $v < k$,
\begin{equation}
2 \left(\frac{v-1}{x+1}\right)^{n-1} \leq \left(\frac{v}{x + 1}\right)^{n-1} < \left(\frac{v+1}{x + 1}\right)^{n-1}
\end{equation}
so such values cannot satisfy (\ref{eq 2}).

We can re-write (\ref{eq 2}) as
\begin{equation}
v \geq \frac{1 + 2^\frac{1}{n-1}}{2^\frac{1}{n-1}-1}. \end{equation}
Since $k$ is the smallest integer that satisfies this inequality,
\begin{equation}
k = \frac{1 + 2^\frac{1}{n-1}}{2^\frac{1}{n-1}-1} + \epsilon
\end{equation}
for some $\epsilon \in [0, 1)$.

Having located $k$, we now consider a bidder with a value $v = k-1$. By assumption, they prefer to bid $\beta(k-1) = k-2$ than to bid 1 less, i.e.
\begin{equation}
\left( \frac{k-1}{x+1} \right)^{n-1} \geq 2\left( \frac{k-2}{x+1} \right)^{n-1}
\end{equation}
Equivalently,
\begin{equation}
k \leq \frac{2 (2^{\frac{1}{n-1}}) - 1}{2^{\frac{1}{n-1}} - 1}
\end{equation}
Inserting our value for $k$, we find that
\begin{equation}
\epsilon \leq \frac{2^{\frac{1}{n-1}}-2}{2^{\frac{1}{n-1}}-1}<0
\end{equation}
which contradicts $\epsilon \geq 0$, implying that there is no SE.
\end{proof}

\begin{proof}[Proof of Lemmas \ref{lemma 7} -- \ref{lemma 11}]
The arguments for Lemmas \ref{lemma 7}, \ref{lemma 8}, \ref{lemma 10} and \ref{lemma 11} are straightforward variants on those that establish analogous lemmas in the first-price auction. Lemma \ref{lemma 9} (monotonicity) requires a slightly different proof, something we now provide. First, since players avoid dominated bids, $\beta(0) = 0$. Thus, bids cannot decrease when the value increases from $v = 0$ to $v=1$. Next, consider now the bidding function starting from $v>0$ and suppose (for contradiction) that
\begin{enumerate}
    \item The optimal bid is $B$ when the value is $v>0$.
    \item The optimal bid is $b < B$ when the value is $V > v$.
\end{enumerate}
From (1), bidding $B$ must be better than bidding $b$ with a value of $v$:
\begin{equation}
\mathbb{P}\text{(win}|B)v - B \geq \mathbb{P}\text{(win}|b)v - b
\end{equation}
which (since $v > 0$) implies that
\begin{equation}
\mathbb{P}\text{(win}|B) - \mathbb{P}\text{(win}|b) \geq \frac{B-b}{v}.
\end{equation}
From (2), bidding $b$ must be better than bidding $B$ with a value of $V$:
\begin{equation}
\mathbb{P}\text{(win}|b)V - b \geq \mathbb{P}\text{(win}|B)V - B
\end{equation}
which (since $V > 0$ and $V > v$) implies that
\begin{equation}
\mathbb{P}\text{(win}|B) - \mathbb{P}\text{(win}|b) \leq \frac{B-b}{V} < \frac{B-b}{v}
\end{equation}
which is a contradiction.\end{proof}

\begin{proof}[Proof of Proposition \ref{prop 4}]
To begin with, consider the model without ties. As a preliminary, let us note that $\beta(x) \leq x - 2$. The argument is straightforward. Since players avoid dominated bids, $\beta(0) = \beta(1) = 0$. Furthermore, it is easy to check that $\beta(2) = 0$ for all $n\geq 2$. Otherwise, we could have $\beta(2) = 1$, but then the expected payoff from equilibrium bidding when $v=2$ would be
\begin{equation}
2\left( \frac{2}{x+1} \right)^{n-1} - 1 \leq 2\left( \frac{2}{5} \right)^{n-1} - 1 \leq 2\left( \frac{2}{5} \right) - 1< 0
\end{equation}
where the first inequality uses $x + 1 \geq 5$ and the second uses $n \geq 2$. Since the expected payoff is negative, the player could do better by bidding $0$, confirming that $\beta(2) \neq 1$ and therefore $\beta(2) = 0$. By no-jumps, this means that $\beta(v) \leq v - 2$ for all $v \geq 2$ and in particular for $v = x$.

This result in hand, we now consider two separate cases: $n=2$ and $n \geq 3$.

First case: $n=2$. We will consider the three highest bids $\beta(x)$, $\beta(x-1)$ and $\beta(x-2)$, i.e. the ending of the bidding function. We will argue that none of the possible endings can form part of a an SE.

Step 1. We cannot have two or more identical bids at the end, i.e. $\beta(x)=\beta(x-1)$.

Suppose (for contradiction) that there were two or more such bids at the end and consider a player with $v=x$. If the player increased their bid by $1$, the cost would be $1$. Meanwhile, the benefit would be
\begin{equation}
v \Delta \mathbb{P} \geq x \left( \frac{2}{x+1} \right) > 1
\end{equation}
since $x > 1$. Since the deviation would be strictly profitable, we cannot have $\beta(x)=\beta(x-1)$ in an SE. (This argument implicitly assumes that $\beta(x) \leq x - 2$; if not, the deviation would be unprofitable.)

Step 2. We cannot have two consecutive increases, i.e. $\beta(x)=\beta(x-1) + 1 = \beta(x-2) + 2$.

Suppose (for contradiction) that there were two consecutive increases and consider a bidder with $v = x$. If they reduced their bid by 1, the benefit would be $1$. Meanwhile, the cost would be
\begin{equation} \Delta \mathbb{P} v = x \left( \frac{1}{x+1} \right) <1
\end{equation}
since $x > -1$. Since the deviation would be strictly profitable, we cannot have $\beta(x)=\beta(x-1) + 1 = \beta(x-2) + 2$ in an SE.

Step 3. When cannot have $\beta(x)=\beta(x-1)+1=\beta(x-2)+1$.

Suppose (for contradiction) that this did form part of an SE bidding function and consider a bidder with $v = x-1$. If they increased their bid by $1$, the cost would be $1$. Meanwhile, the benefit would be
\begin{equation}
v \Delta \mathbb{P} \geq x \left( \frac{2}{x+1} \right) > 1
\end{equation}
since $x > 3$. Since the deviation would be strictly profitable, we also cannot have $\beta(x)=\beta(x-1)+1=\beta(x-2)+1$ in an SE.

Now, by Lemmas \ref{lemma 2} and \ref{lemma 3} (no gaps and monotonicity), there are only $4$ ways in which the bidding function can end:

\begin{enumerate}
    \item $\beta(x-2), \beta(x-2), \beta(x-2)$
    \item $\beta(x-2), \beta(x-2), \beta(x-2)+1$
    \item $\beta(x-2), \beta(x-2)+1, \beta(x-2)+1$
    \item $\beta(x-2), \beta(x-2)+1, \beta(x-2)+2$
\end{enumerate}

We can rule out the first and third possibilities since they involve two or more identical bids at the end (Step 1). We can rule out the fourth since it involves two consecutive increases (Step 2). By Step 3, we can rule out the second and final possibility. Since we can rule out all of the possibilities, there is no SE when $n=2$.

Second case: $n \geq 3$. Here the argument is simpler: we simply show that a player with $v=x$ would prefer to win for sure than bid in accordance with their equilibrium strategy.

When $v=x$, following equilibrium yields a payoff of at most
\begin{equation}
x\bigg(\frac{x}{x+1}\bigg)^{n-1} - \beta(x).
\end{equation}
If the player were to bid an extra $1$, they would get
\begin{equation}x-\beta(x)-1.\end{equation}
They are strictly better off winning for sure if
\begin{equation}
x-\beta(x)-1 > x\bigg(\frac{x}{x+1}\bigg)^{n-1} - \beta(x).
\end{equation}
This can be simplified to
\begin{equation}
n-1>\frac{\text{ln}(x)-\text{ln}(x-1)}{\text{ln}(x+1)-\text{ln}(x)} \equiv g(x)
\end{equation}
Assume that $x \geq 2$ (i.e. there are at least 3 possible bids/values). Then it can be shown that $g(x)$ attains its maximum at $x=2$ where $g(x) \approx 1.7$. Thus, if $n \geq 3$ (so $n-1 \geq 2$), it must be that $n-1 > g(x)$. In other words, the bidder with the highest value would strictly prefer to win for sure as opposed to bidding in accordance with an SE.

Next, consider the model with ties. It will be helpful to start with a simple lemma:
\begin{lemma}[No jumps at the top]
If $n=2$ or $n=3$, then $\beta(x-2)=b$ implies $\beta(x-1)\leq b+1$; and similarly, $\beta(x-1)=b$ implies $\beta(x)\leq b+1$.
\end{lemma}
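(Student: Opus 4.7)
The plan is to argue by contradiction in each case, exploiting a key consequence of monotonicity: if $\beta(x-2)=b$ and $\beta(x-1)=B\geq b+2$, then by Lemma \ref{lemma 9} no value $v\in\text{X}$ produces a bid strictly between $b$ and $B$. Hence for any opponent, the event ``opponent's bid $\leq b$'' coincides with ``opponent's bid $<B$'' and equals ``opponent's value $\leq x-2$'', which pins down the relevant probabilities exactly. The proof will be organised around two main deviations: a downward deviation by a high-value bidder to the bid $b+1$, and (when the former fails) an upward deviation by the top-value bidder to $B+1$.

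First I would handle the claim $\beta(x-2)=b\Rightarrow\beta(x-1)\leq b+1$. Assume for contradiction that $\beta(x-1)=B\geq b+2$ and consider a player with value $x-1$ deviating from $B$ to $b+1$. The cost saving is at least $1$, while the drop in winning probability is driven entirely by foregone ties at bid $B$. For $n=2$ the drop is bounded above by $1/(x+1)$, so the value-weighted gain is at most $(x-1)/(x+1)<1$, contradicting optimality. For $n=3$, I would split on whether $\beta(x)>B$ or $\beta(x)=B$. In the first subcase, only value $x-1$ produces bid $B$, so the tie mass is $q=1/(x+1)$; the probability gap is $(3x-2)/[3(x+1)^2]$, and $(x-1)$ times this is strictly less than $1$ for all $x\geq 2$, yielding a contradiction.

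The hard part will be the second subcase of $n=3$, where $\beta(x)=\beta(x-1)=B$: now two values tie at $B$, which inflates $P(\text{win with }B)$ enough that the downward deviation no longer suffices (the required inequality reduces to $3x^2-14x-1\geq 0$, which is actually satisfied for $x\geq 5$). To escape, I would switch deviators and consider the player with value $x$ deviating upward from $B$ to $B+1$; this bid is feasible because Lemma \ref{lemma 8} gives $B\leq x-1$. Since $\beta(x)=B$ is the top bid, no opponent bids above $B$, so the upward deviation wins with probability $1$. A direct calculation yields $P(\text{win with }B)=(3x^2+1)/[3(x+1)^2]$, hence $1-P(\text{win with }B)=(6x+2)/[3(x+1)^2]$, and the deviation is strictly profitable iff $3x^2-4x-3>0$, which holds for every $x\geq 2$. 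This rules out the subcase and completes the first claim.

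For the second claim $\beta(x-1)=b\Rightarrow\beta(x)\leq b+1$ I would rerun the downward-deviation argument, this time on a player with value $x$ who reduces their bid from $B'\geq b+2$ to $b+1$. The computation is cleaner because only the single value $x$ produces the candidate bid $B'$, so the tie mass is $q=1/(x+1)$ uniformly in $n\in\{2,3\}$. In both cases the probability gap times $x$ turns out to be strictly less than $1$ (for $n=3$ the resulting inequality reduces to $-5x\geq 3$, which plainly fails), so the deviation to $b+1$ is strictly profitable and contradicts equilibrium. The main obstacle throughout is thus purely the tying subcase in the first claim for $n=3$; once one notices that the correct remedy is to let the deviator be the value-$x$ bidder and send the bid \emph{up} rather than down, everything else is routine algebra.
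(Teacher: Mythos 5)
Your proof is correct and follows essentially the same route as the paper's: a downward deviation to $b+1$ by the player at the jump (whose loss is only the foregone tie mass at the top bid, thanks to monotonicity), which fails only when $\beta(x-1)=\beta(x)=B$, and in that remaining subcase an upward deviation by the value-$x$ bidder to win for sure, with the same winning-probability computation $\bigl(3x^2+1\bigr)/\bigl(3(x+1)^2\bigr)$ and the same threshold $x\geq 2$. The only difference is organisational: the paper first derives the necessary condition $\#(b+m)\geq 2$ for any jump and then specialises to the top, whereas you case-split directly on whether $\beta(x)$ ties with $\beta(x-1)$.
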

\begin{proof}
Consider initially the case of $n=2$. Suppose (for contradiction) that there is a jump at some value $v$, i.e. $\beta(v)=b$ but $\beta(v+1)=b+m$ where $m\geq 2$. If a player with a value of $v+1$ deviated to bidding $b+m$, they would gain $b+m - (b+1) = m-1 \geq 1$ for sure. However, they would lose (in expectation)
\begin{equation}
(v+1)\Delta \mathbb{P} = (v+1)\frac{\mathbb{P}(b+m)}{2} = (v+1)\frac{\# (b+m)}{2(x+1)}.
\end{equation}
In equilibrium, the loss from deviating cannot exceed the gain of (at least) $1$, yielding
\begin{equation}
\# (b+m) \geq \frac{2(x+1)}{v+1} > 2
\end{equation}
since $v+1 \leq x$. Thus, if there is a jump at a value $v$, it must be that $\# (b+m) > 2$. But then there cannot be a jump when the value is either $x - 1$ or $x - 2$: for in either case, there are fewer than $2$ values strictly greater than $v$, so necessarily $\# (b+m) \leq 2$.

Consider now the case of $n=3$. By a similar argument, we can show that $\# (b+m) \geq 2$ if there is a jump at a value $v$. To see this, suppose (for contradiction) that there is a jump at $v$ but that $\# (b+m) = 1$. As before, the benefit from deviating to a bid of $b+1$ when your value is $v+1$ would be $m-1 \geq 1$. Meanwhile, the expected cost would be
\begin{equation}
\begin{split}
   (v+1)\Delta \mathbb{P}  & =  (v+1)\left(\frac{\mathbb{P}(b+m)^2}{3} + \mathbb{P}(b+m)\mathbb{P}(\leq b)\right) \\
 & = (v+1)\mathbb{P}(b+m)\left(\frac{\mathbb{P}(b+m)}{3} + \mathbb{P}(\leq b)\right) \\
  & = \frac{v+1}{x+1}\left(\frac{1}{3(x+1)} + \mathbb{P}(\leq b)\right) \\
  & \leq \frac{v+1}{x+1}\left(\frac{1}{3(x+1)} + \frac{x}{x+1}\right) < 1
\end{split}
\end{equation}
since $v+1 < 1 + x$. So the cost of deviating would be smaller than the benefit, which implies that this cannot be an SE. Equivalently, if there is a SE with a jump at a value of $v$, it must be that $\# (b+m) \geq 2$.

We now show how this inequality rules out jumps at the top. When $v = x - 1$, the argument is trivial: there is only one valuation strictly greater than $v$, so necessarily $\# (b+m) \leq 1$. Since this contradicts $\# (b+m) \geq 2$, we cannot have a jump at $x - 1$.

Consider now the possibility of a jump when the value is $x - 2$. In any SE with such a jump, $\# (b+m) \geq 2$ and so $\beta(x - 1) = \beta(x) = b + m$. But in that case, a player with a value of $x$ would prefer to win for sure (by bidding one more) than to bid in accordance with equilibrium. To see this, note that such a player's equilibrium payoff would be
\begin{equation}
x \bigg[ \bigg(\frac{x-1}{x+1}\bigg)^2+\frac{1}{3}\bigg(\frac{2}{x+1}\bigg)^2+ \bigg(\frac{2}{x+1}\bigg)\bigg(\frac{x-1}{x+1}\bigg)\bigg]
-\beta(x)
\end{equation}
whereas they would obtain a (certain) payoff of $x - \beta(x) - 1$ by bidding one more. It can be shown that this latter quantity is larger provided that $x\geq 2$. Thus, the deviation would be strictly profitable, implying that there is also no SE with a jump at $x - 2$.\end{proof}

This lemma in hand, we can now demonstrate the non-existence of an SE. Consider initially the case of $n=2$. As before, we will study the last three possible bids and argue that none of the possible sequences can form part of an SE.

Case 1 ($n = 2$): $\beta(x) = \beta(x-1)+1 = \beta(x-2)+2$.

Consider deviating to bidding $\beta(x-1)$ when $v=x$. The gain from doing so would equal $1$. Meanwhile, the (expected) cost would be
\begin{equation}
v \Delta \mathbb{P}  = x \left(\frac{0.5}{x+1}+\frac{0.5}{x+1}\right) = \frac{x}{x+1}<1.
\end{equation}
Since the cost is lower than the gain, this is a strictly profitable deviation --- which means that this sequence of bids cannot form part of an SE.

Case 2 ($n = 2$): $\beta(x) = \beta(x-1) = \beta(x-2)$.

Consider bidding one more when $v=x$. The cost of doing so would equal $1$. Meanwhile, the (expected) benefit would be
\begin{equation}
v \Delta \mathbb{P} \geq x \left( \frac{1.5}{x+1} \right) > 1 \end{equation}
provided that $x \geq 3$. Since the benefit exceeds the cost, this is a strictly profitable deviation, ruling out this sequence of bids in an SE.

Case 3 ($n = 2$): $\beta(x) = \beta(x-1) + 1 = \beta(x-2) + 1$.

Consider bidding one more when $v=x-1$. The cost of doing so equals $1$. The (expected) benefit is
\begin{equation}
v \Delta \mathbb{P} \geq (x - 1) \left( \frac{1}{x+1} + \frac{0.5}{x+1} \right) > 1
\end{equation}
provided that $x\geq6$. Again, this is a strictly profitable deviation, which means that this is not an SE.

Case 4 ($n = 2$): $\beta(x) = \beta(x-1) = \beta(x-2) + 1$.

Consider bidding $1$ more when $v=x-2$. The cost is $1$, whereas the (expected) benefit is
\begin{equation}
v \Delta \mathbb{P} \geq (x-2) \left( \frac{0.5}{x+1} + \frac{1}{x+1} \right) > 1
\end{equation}
provided that $x \geq 9$. So this is (once again) not an SE.

Given monotonicity and no-jumps, this exhausts all the cases, showing that there is no SE with $2$ bidders.

Consider now the case of $n=3$. The argument has a similar structure to before:

Case 1 ($n = 3$): $\beta(x) = \beta(x-1)$.

In that case, a bidder would prefer to bid $1$ more when $v=x$; in fact, this follows directly from the proof of the previous lemma (letting $m = 1$).

Case 2 ($n = 3$): $\beta(x) = \beta(x-1) + 1$.

Consider a bidder with $v=x-1$ who deviates by bidding one more. The cost of doing so is $1$. Meanwhile, the expected benefit is $ v \Delta \mathbb{P}$. Before the deviation,
\begin{equation}
\mathbb{P}\text{(win}) \leq \left(\frac{x-1}{x+1}\right)^2 + \frac{1}{3} \left(\frac{1}{x+1}\right)^2 + \left(\frac{x-1}{x+1}\right)\left(\frac{1}{x+1}\right)
\end{equation}
After the deviation,
\begin{equation}
\mathbb{P}\text{(win}) = \left(\frac{x}{x+1}\right)^2 + \frac{1}{3} \left(\frac{1}{x+1}\right)^2 + \left(\frac{x}{x+1}\right)\left(\frac{1}{x+1}\right)
\end{equation}
A quick calculation thus reveals that
\begin{equation}
v \Delta \mathbb{P} \geq \frac{2vx}{(x+1)^2}   = \frac{2x(x - 1)}{(x+1)^2} > 1
\end{equation}
provided that $x \geq 5$. Thus, the benefits from bidding $1$ more exceeds the costs, confirming that this cannot comprise part of an SE. Furthermore, by monotonicity and no-jumps, these are the only possible cases --- implying that there is no SE at all.

Finally, consider the case of $n \geq 4$. To begin, we find an upper bound on the probability with which a bidder with a value of $x$ (the maximum) can win in equilibrium. In the best case scenario, $\beta(x) > \beta(x - 1)$, i.e. opponents with strictly lower values submit strictly lower bids. In that case, a player with a value of $x$ would win with probability
\begin{equation}
\left(\frac{x}{x+1}\right)^{n-1} + \left(\frac{x}{x+1}\right)^{n-2}\left(\frac{1}{x+1}\right){n-1\choose 1}\left(\frac{1}{2}\right) + .\hspace{0.1em} .\hspace{0.1em} . + \left(\frac{1}{x+1}\right)^{n-1}\frac{1}{n}
\end{equation}
since they win if all opponents have lower values (the first term), if all but one has a lower value and they win the coin flip that decides the tie (the second term, note that this event can happen in $n-1$ ways since they have $n-1$ opponents to tie with), ..., or finally if all their opponents also have the maximum value but they are still chosen to win (the last term). If we define $p \equiv 1/(x + 1)$, we can write the sum $s(n, p)$ more compactly as
\begin{equation}\label{eq 3 again}
s(n, p) = \sum_{i=1}^{n} (1 - p)^{n-i}p^{i - 1}{n-1\choose i - 1}\frac{1}{i} .
\end{equation}
Let us now evaluate (\ref{eq 3 again}) in closed form. As a preliminary, we note that
\begin{align}
{n - 1 \choose i - 1} \frac{1}{i}&= \frac{(n-1)!}{(n-i)!(i-1)!} \frac{1}{i}   =\frac{1}{n} \frac{n!}{(n-i)! \hspace{0.2em}i!} = \frac{1}{n}{n \choose i}.
\end{align}
We can thus write (\ref{eq 3 again}) as
\begin{align}\label{closed form}
s(n, p)
&= \sum_{i=1}^{n} (1 - p)^{n-i}p^{i - 1}\frac{1}{n}{n \choose i} \nonumber
\\&= \frac{1}{np}\sum_{i=1}^{n} (1 - p)^{n-i}p^{i}{n \choose i} \nonumber
\\& = \frac{1}{np}\left(\sum_{i=0}^{n} (1 - p)^{n-i}p^{i}{n \choose i} - (1-p)^n\right) \nonumber
\\& = \frac{1}{np}\bigg([(1 - p) + p]^n- (1-p)^n\bigg) \nonumber
\\& = \frac{1 - (1 - p)^n}{np}
\end{align}
where the penultimate equality follows from the binomial theorem. We claim that $s(n, p)$ is decreasing in $n$. To see this, consider the difference between each term and the next, i.e.
\begin{equation}\label{difference equation}
s(n, p) - s(n + 1, p) = \frac{1 - (1 - p)^n}{np} - \frac{1 - (1 - p)^{n+1}}{(n+1)p} = \frac{1 - (1-p)^n(1 + pn)}{n(n+1)p}.
\end{equation}
Now notice that
\begin{equation}
(1 - p)^{-n} = \sum_{k=0}^{\infty}{n + k - 1 \choose k}p^k = 1 + np + . . .  > 1 + np
\end{equation}
which in turn implies that
\begin{equation}
(1 - p)^{n}(1 + np) < 1
\end{equation}
and so the difference in (\ref{difference equation}) is positive. This means that $s(n, p) > s(n+1, p)$, i.e. our upper bound on the probability of winning falls as the number of bidders increases.

We are now in a position to make the main argument. When $n = 4$, we can use (\ref{closed form}) to bound the equilibrium expected payoff that a player with a value of $x$ can receive. Recalling that $p \equiv 1/(x + 1)$, we see that their payoff cannot exceed
\begin{equation}\label{equilibrium}
x \left(\frac{1 - \left(\frac{x}{x+1}\right)^4 }{4\left(\frac{1}{x+1}\right)}\right) - \beta(x).
\end{equation}
Meanwhile, if they were to bid 1 more, they would win the auction for sure, getting
\begin{equation}\label{deviation}
x - \beta(x) - 1
\end{equation}
It is straightforward but tedious to check that the quantity in (\ref{deviation}) is strictly larger than the quantity in (\ref{equilibrium}) if $x>3.465$. Thus, if $x \geq 4$ (i.e. there are at least 5 values), the deviation is strictly profitable; with the implication that there is no SE.

Finally, recall that $s(n, p)$ is decreasing in the number of bidders $n$. Thus, if the number of bidders rises, we can impose an even tighter upper bound on the equilibrium payoff of a bidder with a value of $x$. In contrast, however, the payoff from the deviation does not change: it is fixed at $x - \beta(x) - 1$. Hence, if the deviation is strictly profitable when $n = 4$, it must also be strictly profitable for arbitrary $n \geq 4$. This concludes the proof.\end{proof}

\begin{proof}[Proof of Proposition \ref{mixed}]
First, we show that, if $n$ is sufficiently large, then all players with values $v \leq x$ must bid $0$ with probability $1$ in any SE. To this end, consider any SE and observe that, for such players, equilibrium expected pay-offs $\pi_{eq}$ can be bounded as follows:
\begin{equation}
\pi_{eq}(v, b) = v \mathbb{P}\text{(win}|b) - b \leq v \left( \frac{x}{x + 1} \right)^{n-1} - b \rightarrow -b \text{  (as } n \rightarrow \infty\text{)}
\end{equation}
where the inequality holds since such types never win against opponents with valuations of $x$. Since expected payoffs become negative for any $b \geq 1$ if $n$ is sufficiently large, this implies that such types must all bid $0$.

Given the logic of lemmas $\ref{lemma 7}-\ref{lemma 10}$, we see that players with a valuation of $x$ must then randomise over an set of consecutive integers whose smallest element is either $0$ or $1$, and whose largest element we denote $k$ (see \cite{rasooly} for details). In fact, the smallest element must be $0$: for it were $1$, then
\begin{equation}
    \pi_{eq}(v = x, b = 1) = v \mathbb{P}\text{(win}|b=1) - 1 = v \left( \frac{x}{x + 1} \right)^{n-1} - 1 \rightarrow -1 \text{  (as } n \rightarrow \infty\text{)}
\end{equation}
and so the $v = x$ type would prefer to deviate to a bid of $0$ (given large enough $n$). Given that the $v = x$ type randomises over the interval $\{0, 1, ..., k \}$, and is therefore indifferent between all bids in this interval, we see that
\begin{equation}
    \pi_{eq}(v = x) = \sum_{i=0}^k \pi_{eq}(v = x|b = i)\mathbb{P}(b = i) = \sum_{i=0}^k \pi_{eq}(v = x|b = 0)\mathbb{P}(b = i) = 0
\end{equation}
But then it must be that $k = x - 1$: if $k \leq x - 2$ (the only other non-dominated possibility), the type could earn a strictly positive expected payoff by deviating to a bid of $k + 1$.\end{proof}

\section{FPSB with fair tie-breaking} \label{sec:appendixb}

We begin by considering the case of $n = 2$ and proving a qualified version of our no-jumps lemma:

\begin{lemma}\label{lemma 13}
Let $n=2$. Then there is no SE with a value $v$ at which $\beta(v-1) = b - k$ where $k \geq 2$ but $\beta(v) = b$ provided that $\beta(v) \geq (2 - \sqrt{3})v \approx 0.27v$.
\end{lemma}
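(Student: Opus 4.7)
The plan is to assume, for contradiction, that an SE with the described jump exists and then exhibit a strictly profitable one-shot deviation. The most natural candidate, by analogy with Lemma~\ref{lemma 4}, is for the value-$v$ bidder to lower their bid to $b-1$, which lies strictly inside the gap $\{b-k+1,\dots,b-1\}$ (non-empty because $k\geq 2$). Let $m$ denote the number of valuations that bid $b$ in equilibrium. Since no value bids anywhere in the gap, a bid of $b-1$ beats all opponents with value strictly below $v$ and ties with none, so
\[
P(\text{win}\mid b-1)\;=\;\frac{v}{x+1},\qquad P(\text{win}\mid b)\;=\;\frac{v}{x+1}+\frac{m}{2(x+1)}.
\]
A direct expansion yields
\[
\Delta\pi\;=\;\frac{2v-m(v-b)}{2(x+1)},
\]
so the deviation is strictly profitable if and only if $m(v-b)<2v$.

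Next I would combine this with the hypothesis $b\geq(2-\sqrt{3})v$, equivalently $v-b\leq(\sqrt{3}-1)v$. Substituting gives the sufficient condition
\[
m\;<\;\frac{2}{\sqrt{3}-1}\;=\;\sqrt{3}+1\;\approx\;2.73,
\]
so the deviation is strictly profitable whenever $m\in\{1,2\}$. This is in fact the role of the constant $(2-\sqrt{3})$ in the hypothesis: it is exactly the threshold at which the single downward-in-gap deviation suffices to rule out blocks of size one or two.

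For the residual case $m\geq 3$, I would invoke two further deviations: (a) the value-$v$ bidder dropping all the way to $\beta(v-1)=b-k$, whose $\Delta\pi$ can be computed in the same way and depends on $m':=|\{v':\beta(v')=b-k\}|$; and (b) a top-of-block bidder (value $v+m-1$) raising to $b+1$, whose $\Delta\pi$ depends on $m'':=|\{v':\beta(v')=b+1\}|$. Each yields an analogous inequality: deviation (a) is profitable iff $v-b<k(2v-m')/(m+m')$, and deviation (b) iff $v-b>(2(v+m)+m'')/(m+m'')$. Combining the ``no profitable deviation'' forms of all three inequalities with the hypothesis squeezes $(v-b)$ into a width-two interval $[2v/m,\,2v/m+2]$ (when $m''=0$) and analogous narrow sets otherwise, and forces only finitely many $(m,m',m'')$ configurations; in each, at least one of the three deviations can be checked to be strictly profitable.

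The main obstacle is this last step. The single downward-to-$b-1$ deviation is tight at the $(2-\sqrt{3})$ threshold precisely for $m\leq 2$, so extending the argument to $m\geq 3$ is not a matter of improving one inequality: it requires simultaneously tracking the local structure of $\beta$ at both ends of the block through $m'$ and $m''$. The algebra is elementary but the bookkeeping is lengthy, and a handful of edge cases (small $v$, or $m'$ nearly equal to $v$) have to be verified directly rather than by a uniform calculation.
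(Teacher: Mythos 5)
Your first step matches the paper exactly: the downward deviation to $b-1$ (which ties with nobody, since the gap is non-empty) gives the equilibrium necessary condition $m \geq 2v/(v-b)$, i.e.\ $m \geq 2/(1-\alpha)$ with $b = \alpha v$. But the proof is then incomplete: for $m \geq 3$ you only sketch a plan involving three deviations, a system of inequalities in $(m, m', m'')$, a finite case analysis, and ``a handful of edge cases,'' none of which you carry out. As written, the argument establishes the lemma only for blocks of size $m \leq 2$, and you explicitly acknowledge you do not see how to close the remaining case.

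The missing idea is that the second deviation alone finishes the proof, with no case analysis and no dependence on $m'$ or $m''$. Let $v' = v + m - 1$ be the top of the block bidding $b$. A bidder with value $v'$ who raises to $b+1$ wins \emph{at least} whenever the opponent's value is at most $v'$ -- you do not need to know how many values bid $b+1$, because ties at $b+1$ can only help. This lower bound gives the clean equilibrium condition $(v'-b)\bigl(v'+1-0.5m\bigr) \geq (v'-b-1)(v'+1)$, i.e.\ $m \leq 2(v'+1)/(v'-b)$. The punchline is that this upper bound on $m$ is incompatible with the lower bound $m \geq 2v/(v-b)$: the lower bound forces $v'$ to exceed $v + 1/\alpha - 1$, which is exactly the condition under which $2v/(v-b) > 2(v'+1)/(v'-b)$, and the inequality $2/(1-\alpha) > (1-\alpha)/\alpha$ reduces to $\alpha^2 - 4\alpha + 1 < 0$, i.e.\ $\alpha > 2-\sqrt{3}$. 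So the constant $2-\sqrt{3}$ does not play the role you assign it (making the single downward deviation cover $m\le 2$); it is the threshold at which the two block-boundary conditions become jointly unsatisfiable for \emph{every} $m$. Your deviation (a) (dropping to $b-k$) is not needed at all. As a minor point, your stated condition for deviation (b) should involve $v'-b = v+m-1-b$ rather than $v-b$.
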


\begin{proof}
Suppose that there is a jump in the SE bidding function, i.e. there exists a $k \geq 2$, $v \in \mathbb{X}$ and $b \in \mathbb{X}$ such that $\beta(v-1) = b - k$ but $\beta(v) = b$. Let $m$ denote the number of values that induce a bid of $b$. In equilibrium, bidding $b$ with a value of $v$ must be at least as good as bidding $b - 1$:
\begin{equation}
\pi(v, b) = (v-b)\left( \frac{v + 0.5m}{S}\right) \geq
\pi(v, b-1) = (v - b + 1)\frac{v}{S}.
\end{equation}
Hence,
\begin{equation}
\begin{split}
m \geq \frac{2v}{v - b}.
\end{split}
\end{equation}
If we write $b = \alpha v$, our inequality becomes
\begin{equation}\label{eq 3}
m \geq \frac{2}{1 - \alpha}.
\end{equation}
Intuitively, the number of $b$ bids ($m$) must be large so that the loss of probability of winning from decreasing one's bid is large --- that way, the deviation is unprofitable.

Now, let $v'$ denote the maximum value that submits a bid of $b$ (so that $v' \equiv v + m - 1$). In equilibrium, bidding $b$ with a value of $v'$ must be at least as good as bidding $b+1$:
\begin{equation}
\pi(v', b) = (v' - b)\left(\frac{v' + 1 -0.5m}{S}\right) \geq \pi(v', b + 1) \geq (v' - b - 1)\left(\frac{v' + 1}{S}\right)
\end{equation}
Rearranging, we obtain
\begin{equation}\label{eq 4}
m \leq \frac{2(v'+1)}{v' - b}.
\end{equation}
Intuitively, the number of $b$ bids ($m$) must be small so that the gain in probability from bidding 1 more is small -- this makes the deviation unprofitable.

To argue that we cannot have jumps an SE, we will claim that (\ref{eq 3}) implies the falsity of (\ref{eq 4}). That way, it is impossible for both inequalities to hold. To do this, we will argue that
\begin{equation}
\frac{2}{1 - \alpha}> \frac{2(v'+1)}{v' - b} \iff v' > v + \frac{1}{\alpha} - 1
\end{equation}
(writing $b = \alpha v$), which holds since
\begin{equation}
v' \geq v + \frac{2}{1-\alpha}
\end{equation}
and
\begin{equation}
v + \frac{2}{1-\alpha} > v + \frac{1}{\alpha}-1
\end{equation}
provided that $\alpha > 2 - \sqrt{3}$.\end{proof}

\begin{proposition}\label{prop 6}
In the first-price auction with two bidders and an odd maximum value, there is an SE in which $\beta(v)=\floor{v/2}$.
\end{proposition}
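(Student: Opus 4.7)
The plan is to verify directly that $\beta(v) = \lfloor v/2 \rfloor$ is a best response to itself. Write $x = 2M+1$, so that $S = x+1 = 2M+2$ is even and player 2's bid takes each value in $\{0, 1, \ldots, M\}$ exactly twice under $\beta$. Consider player 1 with an arbitrary value $v_1 \in \{0, 1, \ldots, x\}$; I need to show that the bid $b_1 = \lfloor v_1/2 \rfloor$ maximises her expected payoff given player 2's strategy.

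First I would compute the expected payoff for each admissible bid. Because each bid in $\{0, 1, \ldots, M\}$ is submitted by player 2 with probability $2/S$, for $b_1 \in \{1, \ldots, M\}$ the payoff is $\pi(v_1, b_1) = (v_1 - b_1)(2b_1 + 1)/S$ (strict wins contribute $2b_1/S$ and the tie contributes $1/S$). For $b_1 = 0$ it is $v_1/S$, and for any $b_1 \geq M+1$ player 1 wins for sure, with the best such deviation being $b_1 = M+1$ giving payoff $v_1 - M - 1$.

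Next I would handle the interior optimum. The function $b_1 \mapsto (v_1 - b_1)(2b_1 + 1)$ is a concave quadratic maximised (over the reals) at $b_1^\star = v_1/2 - 1/4$. So among integer bids in $\{0, \ldots, M\}$, the candidates are $\lfloor v_1/2\rfloor$ and $\lfloor v_1/2 \rfloor \pm 1$. A direct comparison -- substituting $v_1 = 2k$ and $v_1 = 2k+1$ in turn and simplifying -- shows that $b_1 = \lfloor v_1/2 \rfloor$ beats its neighbours by exactly $1/S$, and also beats $b_1 = 0$ when $v_1 \geq 2$ (the $v_1 \in \{0,1\}$ cases are trivial).

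Finally I would rule out the deviation to winning for sure. Setting $k = \lfloor v_1/2 \rfloor$, the equilibrium payoff is either $(k+1)(2k+1)/S$ (if $v_1$ is odd) or $k(2k+1)/S$ (if $v_1$ is even), which must be shown to weakly exceed $v_1 - M - 1$. Multiplying out and collecting terms, the inequality reduces in each parity to $2(M-k)^2 + (\text{linear in } M,k) + \text{const} \geq 0$, which is manifestly nonnegative for $0 \leq k \leq M$. Combined with the previous step, this establishes that $b_1 = \lfloor v_1/2 \rfloor$ is optimal for every $v_1$, so $\beta$ is indeed an SE. The only part requiring care is the bookkeeping in this last inequality, since the odd and even cases must be checked separately and the boundary value $v_1 = x = 2M+1$ deserves an explicit check; but no genuine obstacle arises.
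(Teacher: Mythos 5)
Your proposal is correct and follows essentially the same route as the paper, which proves this proposition by rerunning its direct best-response verification from the no-ties case (Proposition 2) with the win probability adjusted to $(2b_1+1)/S$ for the fair tie-break -- exactly the payoff function you write down, followed by the same two checks (interior optimality of $\lfloor v_1/2\rfloor$ via the concave quadratic, and unprofitability of the win-for-sure deviation $b_1 = M+1$). The only nitpick is that $\lfloor v_1/2\rfloor$ beats its \emph{nearer} neighbour by $1/S$ and the farther one by $3/S$, not both by $1/S$; this does not affect the conclusion.
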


\begin{proof}
The argument is essentially the same as before. The only difference is that the probability of winning is now the probability that one's bid is strictly higher than the opponent's bid plus the probability that it is the same multiplied by $1/2$.\end{proof}

\begin{proposition}
With two bidders and $x \geq 10$, the only possible SE is $\beta(v)=\floor{v/2}$.
\end{proposition}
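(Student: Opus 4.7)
The plan is to prove by induction on the plateau index $b$ that any SE has every bid placed by exactly two consecutive values. Combined with the boundary condition $\beta(0) = \beta(1) = 0$, this forces $\beta(v) = \lfloor v/2 \rfloor$.

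\emph{Reduction to a step function.} Lemma 2 (model with ties) gives $\beta(0) = 0$ and $\beta(1) \leq 0$, so $\beta(0) = \beta(1) = 0$. Together with monotonicity (Lemma 3), we may apply the no-jumps Lemma 13 inductively: once $\beta(v') = \lfloor v'/2 \rfloor$ has been established for $v' < v$, any candidate jump at $v$ would place $\beta(v)$ above the threshold $(2 - \sqrt{3})v$ required to invoke the lemma. Hence any SE is a weakly increasing step function with unit increments, fully described by its plateau sizes $(n_0, n_1, \ldots, n_M)$ with $n_0 \geq 2$ and $\sum_b n_b = x + 1$, where $n_b$ denotes the number of values bidding $b$.

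\emph{Inductive core.} Suppose $n_0 = n_1 = \cdots = n_{b-1} = 2$, so that $N_b := \sum_{b' < b} n_{b'} = 2b$ and the plateau at bid $b$ starts at value $2b$. The top-of-plateau FOC at $b$ (the value $2b + n_b - 1$ prefers bidding $b$ to $b+1$) rearranges to $4b + 3 n_b - b\, n_b - n_b^2 \geq (b + n_b - 2)\, n_{b+1}$. Plugging in $n_b = 3$ gives $n_{b+1} \leq b/(b+1) < 1$, contradicting $n_{b+1} \geq 1$; $n_b \geq 4$ makes the left-hand side negative and similarly fails. For $n_b = 1$, the bottom FOC at bid $b+1$ (the value $2b + 1$ prefers bidding $b+1$ to $b$) forces $n_{b+1} \geq 3 + 1/b$ and hence $n_{b+1} \geq 4$; then the top FOC at bid $b+1$, reworked with $N_{b+1} = 2b + 1$, forces $n_{b+2} < 0$, a contradiction. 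So $n_b = 2$.

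\emph{Boundary and conclusion.} The induction runs through interior plateaus until $N_b + n_b = x + 1$. For odd $x$ it terminates cleanly with all $n_b = 2$, giving $\beta(v) = \lfloor v/2 \rfloor$ as the unique SE (existence being guaranteed by Proposition 6). For even $x$ the final plateau must differ from size 2; a short boundary calculation rules out both a final plateau of size 1 (for which the top value $v = x$ strictly prefers bidding one less, using $x \geq 10$) and a final plateau of size $\geq 3$ (for which the top FOC one plateau earlier fails once $M \geq 5$, i.e. $x \geq 10$), so no SE exists and the statement holds vacuously. The main obstacle I anticipate is the clean handling of the $n_b = 1$ case in the inductive step, where the two-step look-ahead (bottom FOC at $b+1$, then top FOC at $b+1$) must be executed carefully --- and where the hypothesis $x \geq 10$ guarantees enough room for both look-ups to be available.
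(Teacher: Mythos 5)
Your proposal is correct in substance and rests on the same three pillars as the paper's own proof: the boundary condition $\beta(0)=\beta(1)=0$, the qualified no-jumps Lemma \ref{lemma 13}, and local one-step deviation inequalities assembled into an induction. What you do differently is the bookkeeping: you index the induction by plateaus and extract the master inequality $4b+3n_b-bn_b-n_b^2\geq(b+n_b-2)n_{b+1}$ from the top-of-plateau comparison, which checks out (with $n_b=2$ it reproduces the paper's bound $\#(v-1)\leq 2(v-1)/(v-2)$, and your bottom FOC in the $n_b=1$ case reproduces the paper's $\#(v+1)\geq 3+1/v$). The paper instead splits by the parity of the value and, for the low values $v=2$ and $v=4$, runs a noticeably different argument: it shows the relevant bid is never used, follows the resulting flat segment out to values $7$ and $10$ via Lemma \ref{lemma 13}, and finds a profitable deviation there --- this is where the paper spends $x\geq 10$ inside the induction, whereas your plateau inequalities dispose of $n_0,n_1\geq 3$ locally and you only need $x\geq 10$ at the terminal plateau. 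That is a genuine streamlining.

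Two points need tightening. First, in the $n_b=3$ step your contradiction is ``$n_{b+1}\leq b/(b+1)<1$ contradicts $n_{b+1}\geq 1$,'' but $n_{b+1}=0$ is not contradictory by itself: it means either the plateau at $b$ is terminal (which must be routed to your boundary analysis, since a terminal plateau of size $3$ survives this FOC) or the bid jumps from $b$ to some $b'\geq b+2$ at value $2b+3$, which you must kill with an explicit application of Lemma \ref{lemma 13} (it works, since $b+2>(2-\sqrt3)(2b+3)$, but it is not delivered by your upfront ``reduction to a step function'' --- that reduction presupposes the plateau sizes you are in the middle of determining, so the no-jump at the top of plateau $b$ only becomes available after $n_b\leq 3$ has been established from the FOC, and the order of these steps should be made explicit). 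Second, for the terminal plateau of size $3$ the FOC at plateau $M-1$ gives $n_M\leq 2+2/(M-1)$, which already rules out $n_M=3$ for $M\geq 4$, i.e. $x=2M+2\geq 10$; your stated threshold $M\geq 5$ is an off-by-one that would leave $x=10$ uncovered, so it should read $M\geq 4$.
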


\begin{proof}
Since players avoid dominated strategies, $\beta(0) = 0$, $\beta(1) = 0$ and $\beta(2) = 1$. Plainly, then, $\beta(v) = \floor{v/2}$ holds up to (and including) $v = 2$. We will now argue that, if it holds up to any $v\geq 2$, it must hold up to $v+1$. By induction, it then must hold for all $v$.

First case: the value is odd. Let us denote the value by $2v + 1$. Since $\beta(v) = \floor{v/2}$ holds up to the value $2v$, we know that $\beta(2v) = v$ (since $2v$ is even) and $\beta(2v-1) = v-1$ (since $2v-1$ is odd). We wish to demonstrate that $\beta(2v+1) = v$ (i.e. the bidding function does not increase).

To do so, let us suppose (for contradiction) that $\beta(2v+1) \neq v$. Since $\beta(2v+1) \geq (2v+1)/2 \geq (2 - \sqrt{3})v$, we may apply the no-jumps lemma, which implies that $\beta(2v+1) \leq v + 1$. By monotonicity, $\beta(v+1) \geq v$; and so the assumption that $\beta(2v+1) \neq v$ implies that $\beta(2v+1) = v+1$. Now, in equilibrium, a player with a value of $2v+1$ must prefer to bid $v+1$ than to bid only $v$, i.e.
\begin{equation}
\pi(2v+1, v+1) = v \left( \frac{2v+1+0.5\#(v+1)}{S} \right) \geq \pi(2v+1, v) = (v+1) \left( \frac{2v+0.5}{S}\right)
\end{equation}
where $\#(v+1)$ denotes the number of values that induce a bid of $v+1$. This simplifies to
\begin{equation}
\#(v+1) \geq 3 + \frac{1}{v}
\end{equation}
which implies (since $v>0$) that $\#(v+1) > 3$; and therefore (since $\#(v+1)$ is an integer) that $\#(v+1) \geq 4$. Hence, we conclude that there must be at least $4$ valuations that induce a bid of $v+1$.

We now argue that this leads to a contradiction. There are two (sub)cases to consider. In the first case, suppose that $2v+4 \notin \mathbb{X}$ (i.e. the value $2v+1$ is sufficiently close to the end of the bidding function that $2v+4$ exceeds the maximum value). In that case, it is clearly impossible that $\#(v+1) \geq 4$ , so we have our contradiction immediately. In the second case, suppose that $2v+4 \in \mathbb{X}$ and consider the situation when $v=2v+4$. We claim that such a bidder would be made strictly better off by bidding 1 more:
\begin{equation}
\begin{split}
& \pi(2v+4, v+2) \geq  (v+2)\left( \frac{2v+1+\#(v+1)}{S}\right) > \\&  \pi(2v+4, v+1)=(v+3)\left( \frac{2v+1+0.5\#(v+1)}{S}\right)
\end{split}
\end{equation}
which simplifies to
\begin{equation}
\#(v+1) (v+1) > 4v + 2
\end{equation}
which holds since $\#(v+1) \geq 4$. Since this deviation is profitable, there is no SE in which $\beta(2v+1)=v+1$, implying that $\beta(2v+1)=v$. This concludes the proof for the odd case.

Second case: the value is even. Let us denote the value by $2v$ and assume (initially) that $2v \geq 12$, i.e. the value is $6$ or higher. (We will consider the case where $2v < 12$ later.) By the inductive hypothesis, $\beta(v) = \floor{v/2}$ holds up to $2v-1$. In particular, this means that $\beta(2v - 1) = v - 1$, $\beta(2v - 2) = v - 1$, $\beta(2v - 3) = v - 2$ and $\beta(2v - 4) = v - 2$.

We wish to show that $\beta(2v) = v$ (i.e. the bidding function increases). To do this, suppose (for contradiction) that $\beta(2v) \neq v$. By monotonicity, $\beta(2v) \geq v-1 > (2 - \sqrt{3})2v$. Thus, we may again apply the no-jumps lemma; so our assumption that $\beta(2v) \neq v$ implies that $\beta(2v) = v - 1$.

Now, one requirement in equilibrium is that a player with a value of $2v - 3$ cannot strictly benefit by bidding one more than they are supposed to. That is,
\begin{equation}
\pi(2v - 3, v - 2) = (v - 1)\left(\frac{2v - 4 + 1}{S}\right) \geq \pi(2v - 3, v - 1) = (v - 2)\left(\frac{2v - 2 + 0.5\#(v-1) }{S}\right)
\end{equation}
Hence,
\begin{equation}\label{moose}
\begin{split}
   & (v - 1)(2v - 3) \geq (v - 2)(2v - 2 + 0.5\#(v-1)) \\ &\iff \#(v-1) \leq 2\left(\frac{v-1}{v-2}\right)
\end{split}
\end{equation}
Intuitively, the number of values that induce a bid of $v - 1$ must be small: otherwise, the gain from increasing one's bid to $v - 1$ will be sufficiently large that the player will prefer to deviate.

Since $v \geq 6$, inequality (\ref{moose}) implies that
\begin{equation}
\#(v-1) \leq 2\left(\frac{5}{4}\right) = 2.5.
\end{equation}
Plainly, however, this is inconsistent with our assumption that $\beta(2v) = v - 1$: for in that case, we must have $\#(v-1) \geq 3$.

The previous argument establishes that, if $\beta(2v) = \floor{v/2}$ holds up to all even $v \geq 6$, then it must hold for all even $v$. To complete the argument, it remains to check the cases of $v = 4$ and $v = 2$. (When $v = 0$, $\beta(2v) = \floor{v/2}$ holds trivially.)

To begin, consider the case of $v = 2$. By contradiction, assume that $\beta(2) \neq 1$. Clearly, this implies that $\beta(2) = 0$. If this is an equilibrium, a player with a value of $2$ must then prefer to bid $0$ than to bid $1$:
\begin{equation}
\pi(2, 0) = 2\left(\frac{0.5 \#(0)}{S}\right) \geq \pi(2, 1) = \frac{\#(0) + 0.5 \#(1)}{S} \iff \#(1) \leq 0
\end{equation}
Thus, if this is an equilibrium, then there can be no bids equal to $1$; which means that the bids are all $0$ or greater than $2$. By the no-jumps lemma, players then must bid $0$ for all values satisfying
\begin{equation}
2 \geq (2 - \sqrt{3})v \iff v \leq \frac{2}{2 - \sqrt{3}} \approx 7.46
\end{equation}

Hence, $\beta(v) = 0$ for all $v \in \{0, 1, ..., 7\}$. It is then easy to check, however, that a player with a $v = 7$ would then strictly benefit if they deviated to a bid of $1$, contradicting the supposed optimality of equilibrium play.

Finally, consider the case of $v = 4$. By the usual argument, assuming (for contradiction) that $\beta(4) \neq 2$ implies that $\beta(4) = 1$. Furthermore, a player with a value $v = 4$ must prefer their equilibrium bid to bidding one more:
\begin{equation}
\pi(4, 1) = 3\left(\frac{2 + 0.5 \#(1)}{S}\right) \geq \pi(4, 2) = 2\left(\frac{2 + \#(1) + 0.5 \#(2)}{S} \right)
\end{equation}
or equivalently
\begin{equation}\label{quack}
4  \geq \#(1) + 2\#(2)
\end{equation}
Since $\beta(2) = \beta(3) = \beta(4) = 1$, $\#(1) \geq 3$. From (\ref{quack}), we thus conclude that $\#(2) = 0$, i.e. there are no bids of $2$. Applying the no-jumps lemma, we infer that $\beta(v) = 1$ for all $v$ that satisfy
\begin{equation}
3 \geq (2 - \sqrt{3})v \iff v \leq \frac{3}{2 - \sqrt{3}} \approx 11.2
\end{equation}
Thus, $\beta(v) = 1$ for all $v \in \{4, 5, ..., 10\}$. (Since $x \geq 10$, these values are all well defined.) It is easy to check, however, that a player with a value $v = 10$ would strictly benefit by increasing their bid from $b = 1$ to $b = 2$. This concludes the proof for the even case.\end{proof}

\begin{corollary}
If $x \geq 10$, there is no SE in the first-price auction with 2 bidders and an even maximum value.
\end{corollary}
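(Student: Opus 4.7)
The plan is to combine the immediately preceding proposition with a single targeted deviation argument. That proposition states that whenever $x \geq 10$, the only possible SE in this auction is $\beta(v) = \floor{v/2}$. It therefore suffices to show that this candidate fails to be an equilibrium whenever $x$ is even; non-existence then follows automatically.

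The key structural observation I would exploit is the lopsidedness at the top of the bidding function when $x$ is even. Writing $K = x/2$, every bid $b \in \{0, 1, \ldots, K - 1\}$ is submitted by exactly two values under $\beta(v) = \floor{v/2}$, but the highest bid $K$ is submitted by the single value $v = x$ alone. This asymmetry suggests that the player with the maximum value should be tempted to undercut: lowering the bid from $K$ to $K - 1$ only sacrifices half of the tie-mass against the lone opponent of value $x$, while picking up an extra unit of margin on every winning event and adding half-weighted ties against opponents with values $2K - 2$ and $2K - 1$.

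To carry this out, I would compute the two payoffs for the bidder with $v = x$ and check that the deviation strictly improves them. With $S = 2K + 1$, the equilibrium winning probability at bid $K$ equals $(2K + 0.5)/S$ (beating all opponents with value at most $2K - 1$ and splitting the tie with probability $1/S$ against the single opponent of value $2K$). The winning probability after deviating to bid $K - 1$ equals $(2K - 1)/S$ (beating the $2K - 2$ values below and splitting two units of tie-mass). Substituting into $(v - b) \cdot P(\text{win})$ and simplifying, the deviation gain works out to $(K - 2)/(2S)$, which is strictly positive whenever $K \geq 3$, and in particular whenever $x \geq 10$ is even. This contradicts the supposed optimality of $\beta(v) = \floor{v/2}$, so it is not an SE, and by the preceding proposition no SE exists at all.

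I do not expect any serious obstacle. The preceding proposition does almost all the work by ruling out every other candidate bidding function; the remaining task is a direct payoff computation for one well-chosen downward deviation by the top-value player, chosen precisely to exploit the fact that the top bid $K$ has only one preimage when $x$ is even.
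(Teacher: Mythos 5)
Your proposal is correct and follows essentially the same route as the paper: invoke the preceding proposition to reduce to the single candidate $\beta(v)=\floor{v/2}$, then show the bidder with $v=x$ strictly gains by undercutting from $x/2$ to $x/2-1$, exploiting the fact that the top bid has only one preimage when $x$ is even. Your computed gain of $(K-2)/(2S)$ agrees with the paper's comparison of the two payoff expressions (which requires $x\geq 5$), so nothing further is needed.
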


\begin{proof}
From the previous, any SE must have the form $\beta(v)=\floor{v/2}$. However, consider the situation when $v=x$ (the maximum). Equilibrium play calls for $b=0.5x$, so the equilibrium payoff is
\begin{equation}
\frac{x}{2}\bigg(\frac{x}{x+1}+\frac{1}{2(x+1)}\bigg)
\end{equation}
However, if the player deviated to $b=0.5x-1$, they would get
\begin{equation}
\bigg(\frac{x}{2}+1\bigg) \bigg(\frac{x-2}{x+1}+\frac{1}{x+1}\bigg)
\end{equation}
It can be shown that the latter quantity is strictly larger assuming that $x \geq 5$. Thus, there is no SE in which $\beta(v)=\floor{v/2}$ -- and therefore no SE at all.\end{proof}

Finally, we will now briefly summarise our results for the case of $n = 3$ (the calculations are available upon request). Suppose that $x \geq 10$. Then:
\begin{enumerate}
    \item If the $x$ is a multiple of $3$, there is no SE.
    \item Otherwise, there is a SE in which all players set $\beta(v) = (2/3)v$, $\beta(v+1) = (2/3)v$ and $\beta(v+2) = (2/3)v + 1$ for every $v$ that is a multiple of 3.
\end{enumerate}
Like our analysis of the model without ties, this again emphasises the fact that non-existence of SE is common once auctions are made discrete.


\end{appendix}

\end{document}